\documentclass[runningheads]{llncs}

\usepackage{amsmath}
\usepackage{amssymb}
\usepackage{listings}
\usepackage{graphicx}
\usepackage{multicol}
\usepackage{xcolor}

\newcommand{\elide}[1]{}

% Booleans

% program
\newcommand{\compress}{\mathsf{compress}}

\newcommand{\trace}{\mathsf{trace}}
\newcommand{\ctrace}{\mathsf{ctrace}}     % compressed trace

\newcommand{\states}{\mathsf{states}}

\newcommand{\game}{\mathcal{G}}
\newcommand{\strat}{\mathcal{S}}
\newcommand{\zip}{\mathsf{zip}}
\newcommand{\unzip}{\mathsf{unzip}}

\newcommand{\Lang}{\mathcal{L}}

\pagestyle{plain}

\title{Witnessing Secure Compilation}

\author{Kedar S. Namjoshi \inst{1}
  \and Lucas M. Tabajara \inst{2}}

\institute{
  Bell Labs, Nokia, Murray Hill, NJ, USA \email{kedar.namjoshi@nokia-bell-labs.com}
  \and
  Rice University, Houston, TX, USA \email{lucasmt@rice.edu}}

\begin{document}
\maketitle
\begin{abstract}
Compiler optimizations are designed to improve run-time performance while preserving input-output behavior. Correctness in this sense does not necessarily preserve security: it is known that standard optimizations may break or weaken security properties that hold of the source program. This work develops a translation validation method for secure compilation. Security (hyper-)properties are expressed using automata operating over a bundle of program traces. A flexible, automaton-based refinement scheme, generalizing existing refinement methods, guarantees that the associated security property is preserved by a program transformation. In practice, the refinement relations (``security witnesses'') can be generated during compilation and validated independently with a refinement checker. This process is illustrated for common optimizations. Crucially, it is not necessary to verify the compiler implementation itself, which is infeasible in practice for production compilers. 
\end{abstract}

\section{Introduction}

Optimizing compilers such as GCC and LLVM are used to improve run-time performance of software. Programmers expect the optimizing transformations to preserve program behavior. A number of approaches, ranging from automated testing (cf.~\cite{DBLP:conf/pldi/YangCER11,DBLP:conf/pldi/LeAS14}) to translation validation (cf.~\cite{CVT98,Necula00,ZuckPG03}) to full mathematical proof (cf.~\cite{DBLP:conf/popl/Leroy06}) have been utilized to check this property.

Programmers also implicitly expect that optimizations do not alter the security properties  of the source code (except, possibly, to strengthen them). It is surprising, then, to realize that even correctly implemented optimizations, such as the one illustrated in Figure~\ref{fig:dse}, may weaken security guarantees (cf.~\cite{scrub-2002,dsilva-et-al-2015}). 

\begin{figure}[h]
\begin{minipage}{0.4\textwidth}
\begin{verbatim}
  int x := read_secret_key();
  use(x);
  x := 0; // clear secret data
  rest_of_program();
\end{verbatim}
\end{minipage}
\hspace{0.2\textwidth}
\begin{minipage}{0.4\textwidth}
\begin{verbatim}
 int x := read_secret_key();
 use(x);
 skip; // dead store removed
 rest_of_program();
\end{verbatim}
\end{minipage}
\label{fig:dse}
\caption{Dead Store Elimination: Introducing Information Leaks}
\end{figure}

This common optimization removes stores that have no effect on the program's input-output behavior. Assuming that \verb|x| is not referenced in \verb|rest_of_program|, the transformation correctly replaces \verb|x := 0| in the source program on the left with \verb|skip|. The original assignment, however, was carefully placed to clear the memory holding the secret key. By removing it, the secret key becomes accessible in the rest of the program and vulnerable to attack. While this leak can be blocked using compiler directives~\cite{dce-usenix-2017}, such fixes are unsatisfactory, as they are not portable and might not block other vulnerabilities. 

The ideal is a mathematical guarantee that security is preserved across compiler optimizations. A full mathematical proof establishing this property over all source programs is technically challenging but feasible for a compact, formally developed compiler such as CompCert~\cite{DBLP:conf/popl/Leroy06,DBLP:conf/csfw/BartheGL18}. It is, however, infeasible for compilers such as GCC or LLVM with millions of lines of code written in hard-to-formalize languages such as \verb|C| and \verb|C++|.

The alternative method of \emph{Translation Validation}~\cite{CVT98} settles for the less ambitious goal of formally checking the result of every run of a compiler. In the form considered here, a compiler is \emph{designed} to generate additional information (called a ``certificate'' or a ``witness'') to simplify this check~\cite{rinard99,namjoshi-zuck-2013}. To establish correctness preservation, a refinement (i.e., simulation) relation is the natural choice for a witness. Moreover, such a relation is validated by checking inductiveness over a pair of single-step transitions (in the source and target programs), a condition that is easily encoded as an SMT query. 

For many optimizations, the associated refinement relations are logically simple\footnote{Examples include dead-store elimination, constant propagation, loop unrolling and peeling, loop-invariant code motion, static single assignment (SSA) conversion.}, being formed of equalities between variables at corresponding source and target program points~\cite{rinard99,darkothesis,namjoshi-zuck-2013}.  Crucially, neither the compiler nor the witness generation machinery have to be proved correct: an invalid witness points to either a flaw in compilation, or a misunderstanding by the compiler writer of the correctness argument; both outcomes are worth further investigation. 

In this work, we investigate translation validation for preservation of \emph{security} properties. Two key questions are: What is a useful witness format for  security properties? and How easy is it to generate and check witnesses for validity? Refinement proof systems are known only for two important security properties, non-interference~\cite{amorim-et-al-2014,deng-namjoshi-SAS-2016,DBLP:conf/eurosp/MurraySE18} and constant-time execution~\cite{DBLP:conf/csfw/BartheGL18}. 

In this context, a major contribution of this work is the development of a uniform, automaton-based refinement scheme for a large class of security properties. We use a logic akin to HyperLTL~\cite{DBLP:conf/post/ClarksonFKMRS14} to describe hyperproperties~\cite{DBLP:conf/sas/TerauchiA05,DBLP:conf/csfw/ClarksonS08} (which are sets of sets of sequences). A security property $\varphi$ is represented by a formula $Q_1 \pi_1, \ldots, Q_k \pi_k: \kappa(\pi_1,\ldots,\pi_k)$, where the $\pi_i$'s represent traces over an observation alphabet, the $Q_i$'s stand for either existential or universal quantifiers, and $\kappa$ is a set of bundles of $k$ program traces, represented by a non-deterministic B\"uchi automaton, $A$, whose language is the \emph{complement}\footnote{In this, we follow the standard practice in model checking of using automata for the negation of the desired property, rather than for the property itself.} of the set $\kappa$.

A transformation from program $S$ to program $T$ preserves a security property $\varphi$ if every violation of $\varphi$ by $T$ has a matching violation of $\varphi$ by $S$. Intuitively, matching violations have the same inputs and the same cause.

The first refinement scheme that we propose applies to purely universal properties, i.e., those of the form $\forall \pi_1 \ldots \forall \pi_k: \kappa(\pi_1,\ldots,\pi_k)$. The witness is a relation $R$ between the product transition systems $A \times T^k$ and $A \times S^k$ that meets certain inductiveness conditions. If $R$ can be defined, the transformation preserves $\varphi$. 

The second refinement scheme applies to properties with arbitrary quantification (the $\forall\exists$ alternation is needed to express limits on an attacker's knowledge). Here, the witness is a \emph{pair} of relations: one being a refinement relation (as before) between the product transition systems $A \times T^k$ and $A \times S^k$; the other component is an input-preserving bisimulation relation between $T$ and $S$.  We show that if such a pair of relations can be defined, the transformation from $S$ to $T$ preserves $\varphi$ in the sense above. 

We give examples of program transformations to illustrate the definition of security witnesses. As is the case for correctness, these witness relations also have a simple logical form, which can be analyzed by SMT solvers. The information needed to define these relations is present during compilation, so the relations may be easily generated by a compiler. 

Finally, we show that the refinement proof rules derived by this scheme from the automata-theoretic formulations of non-interference and constant-time are closely related to the known proof rules for these properties. Proofs carried out with the known notions are also valid for the newly derived ones. 

The key contribution of this work is in the flexible refinement scheme, which makes it possible to construct refinement proof rules for a wide range of security properties, including subtle variations on standard properties such as non-interference. The synthesized refinement rules may also be used for deductive proofs of security preservation. The primary inspiration for the automaton-based refinement notion comes from a beautiful paper by Manna and Pnueli~\cite{DBLP:conf/tls/MannaP87} demonstrating how to synthesize a deductive verification system for a temporal property from its associated automaton. 
  
\section{Example} \label{sec:example}

To illustrate the approach, consider the following source program, $S$.

\begin{verbatim}
L1: int x := read_secret_input();
L2: int y := 42;
L3: int z := y - 41;
L4: x := x * (z - 1);
\end{verbatim}

In this program, $\texttt{x}$ stores the value of a secret input. As will be described in Section~\ref{sec:transition-systems}, this program can be modeled as a transition system. The states of the system can be considered to be pairs $(\alpha, \ell)$. The first component $\alpha : \mathcal{V} \rightarrow \textsc{Int}$ is a partial assignment mapping variables in $\mathcal{V} = \{\texttt{x}, \texttt{y}, \texttt{z}\}$ to values in $\textsc{Int}$, the set of values that a variable of type $\texttt{int}$ can contain. The second component $\ell \in \textsc{Loc} = \{\texttt{L1}, \texttt{L2}, \texttt{L3}, \texttt{L4}\}$ is a location in the program, indicating the next instruction to be executed. In the initial state, $\alpha$ is empty and $\ell$ points to location $\texttt{L1}$. Transitions of the system update $\alpha$ according to the variable assignment instructions, and $\ell$ according to the control flow of the program.

To specify a notion of security for this program, two elements are necessary: an attack model describing what an attacker is assumed to be capable of observing (Section~\ref{sec:attack_models}) and a security property over a set of program executions (Section~\ref{sec:hyperproperties}). Suppose that an attacker can see the state of the memory at the end of the program, represented by the final value of $\alpha$, and the security property expresses that for every two possible executions of the program, the final state of the memory must be the same, regardless of the secret input, thus guaranteeing that the secret does not leak. Unlike correctness properties, this is a two-trace property, which can be written as a formula of the shape $\forall \pi_1 , \pi_2 : \kappa(\pi_1, \pi_2)$, where $\kappa(\pi_1, \pi_2)$ expresses that the memory at the end of the program is the same for traces $\pi_1$ and $\pi_2$ (cf. Section~\ref{sec:hyperproperties}). The negation of $\kappa$ can then be translated into an automaton $A$ that detects violations of this property.

It is not hard to see that the program satisfies the security property, since $\texttt{y}$ and $\texttt{z}$ have constant values and at the end of the program $\texttt{x}$ is $0$. However, it is important to make sure that this property is preserved after the compiler performs optimizations that modify the source code. This can be done if the compiler can provide a witness in the form of a \emph{refinement relation} (Section~\ref{sec:refinement}). Consider, for example, a compiler which performs constant folding, which simplifies expressions that can be inferred to be constant at compile time. The optimized program $T$ would be:

\begin{verbatim}
L1: int x := read_secret_input();
L2: int y := 42;
L3: int z := 1;
L4: x := 0;
\end{verbatim}

By taking the product of the automaton $A$ with two copies of $S$ or $T$ (one for each trace $\pi_i$ considered by $\kappa$), we obtain automata $A \times S^2$ and $A \times T^2$ whose language is the set of pairs of traces in each program that violates the property. Since this set is empty for $S$, it should be empty for $T$ as well, a fact which can be certified by providing a refinement relation $R$ between the state spaces of $A \times T^2$ and $A \times S^2$.

As the transformation considered here is very simple, the refinement relation is simple as well: it relates configurations $(q,t_0,t_1)$ and $(p,s_0,s_1)$ of the two spaces if the automaton states $p,q$ are identical, corresponding program states $t_0,s_0$ and $t_1,s_1$ are also identical (including program location), and the variables in $s_0$ and $s_1$ have the constant values derived at their location (see Section~\ref{sec:refinement-check} for details). The inductiveness of this relation over transitions of $A \times T^2$ and $A \times S^2$ can be easily checked by an SMT solver by representing the states symbolically.

\newcommand{\pstates}{C}
\newcommand{\pinputs}{I}
\newcommand{\poutputs}{O}
\newcommand{\pinit}{\iota}
\newcommand{\ptrans}{\rightarrow}
\newcommand{\PTRANS}{\Rightarrow}

\newcommand{\EXT}[1]{{#1}_{e}}

\newcommand{\pstate}{c}

\newcommand{\buff}[1]{\hat{#1}}

\newcommand{\emptystring}{\varepsilon}

\section{Background}

We propose an abstract program and attack model defined in terms of labeled transition systems. We also define B\"{u}chi automata over bundles of program traces, which will be used in the encoding of security properties, and describe a product operation between programs and automata that will assist in the verification of program transformations.

\subsubsection{Notation} Let $\Sigma$ be an \emph{alphabet}, i.e., a set of symbols, and let $\Gamma$ be a subset of $\Sigma$. An infinite sequence $u = u(0),u(1),\ldots$, where $u(i) \in \Sigma$ for all $i$, is said to be a ``sequence over $\Sigma$''. For variables $x,y$ denoting elements of $\Sigma$, the notation $x=_{\Gamma}y$ (read as ``$x$ and $y$ agree on $\Gamma$'') denotes the predicate where either $x$ and $y$ are both not in $\Gamma$, or $x$ and $y$ are both in $\Gamma$ and $x=y$. For a sequence $u$ over $\Sigma$, the notation $u|_{\Gamma}$ (read as ``$u$ projected to $\Gamma$'') denotes the sub-sequence of $u$ formed by elements in $\Gamma$. The operator $\compress(v) = v|_{\Sigma}$, applied to a  sequence $v$ over $\Sigma \cup \{\emptystring\}$, removes all $\emptystring$ symbols in $v$ to form a sequence over $\Sigma$. For a bundle of traces $w=(w_1,\ldots,w_k)$ where each trace is an infinite sequence of $\Sigma$, the operator $\zip(w)$ defines an infinite sequence over  $\Sigma^k$ obtained by choosing successive elements from each trace. In other words, $u=\zip(w)$ is defined by $u(i) = (w_1(i),\ldots,w_k(i))$, for all $i$. The operator $\unzip$ is its inverse.

\subsection{Programs as Transition Systems} \label{sec:transition-systems}

A program is represented as a transition system $S = (\pstates, \Sigma, \pinit, \ptrans)$:
\begin{itemize}
\item $\pstates$ is a set of program states, or configurations;
\item $\Sigma$ is a set of observables, partitioned into input, $\pinputs$, and output, $\poutputs$; 
\item $\pinit \in \pstates$ is the initial configuration;
\item $(\ptrans) \subseteq \pstates \times (\Sigma \cup \{\emptystring\}) \times \pstates$ is the transition relation.
\end{itemize}

Transitions labeled by input symbols in $\pinputs$ represent instructions in the program that read input values, while transitions labeled by output symbols in $\poutputs$ represent instructions that produce observable outputs. Transitions labeled by $\emptystring$ represent internal transitions where the state of the program changes without any observable effect.

An \emph{execution} is an infinite sequence of transitions $(\pstate_0, w_0, \pstate_1)(\pstate_1, w_1, \pstate_2) \ldots \in (\ptrans)^\omega$ such that $\pstate_0 = \pinit$ and adjacent transitions are connected as shown. (We may write this as the alternating sequence $\pstate_0,w_0,\pstate_1,w_1,\pstate_2,\ldots$.) To ensure that every execution is infinite, we assume that $(\ptrans)$ is left-total. To model programs with finite executions, we assume that the alphabet has a special termination symbol $\bot$, and add a transition $(\pstate, \bot, \pstate)$ for every final state $\pstate$. We also assume that there is no infinite execution where the transition labels are always $\emptystring$ from some point on.

An execution $x=(\pstate_0, w_0, \pstate_1)(\pstate_1, w_1, \pstate_2) \ldots$  has an associated \emph{trace}, denoted $\trace(x)$, given by the sequence $w_0, w_1, \ldots$ over $\Sigma \cup \{\emptystring\}$. The compressed trace of execution $x$, $\compress(\trace(x))$, is denoted $\ctrace(x)$. The final assumption above ensures that the compressed trace of an infinite execution is also infinite. The sequence of states on an execution $x$ is denoted by $\states(x)$.

\subsection{Attack Models as Extended Transition Systems} \label{sec:attack_models}
The choice of how to model a program as a transition system depends on the properties one would like to verify. For correctness, it is enough to use the standard input-output semantics of the program. To represent security properties, however, it is usually necessary to extend this base semantics to bring out interesting features. Such an extension typically adds auxiliary state and new observations needed to model an attack. For example, if an attack is based on program location, that is added as an auxiliary state component in the extended program semantics. Other examples of such structures are modeling a program stack as an array with a stack pointer, explicitly tracking the addresses of memory reads and writes, and distinguishing between cache and main memory accesses. These extended semantics are roughly analogous to the \emph{leakage models} of~\cite{DBLP:conf/csfw/BartheGL18}. The base transition system is extended to one with a new state space, denoted $\EXT{\pstates}$; new observations, denoted $\EXT{\poutputs}$; and a new alphabet, $\EXT{\Sigma}$, which is the union of $\Sigma$ with $\EXT{\poutputs}$. The extensions do not alter input-output behavior; formally, the original and extended systems are bisimular with respect to $\Sigma$.

\subsection{B\"uchi Automata over trace bundles}
A B\"{u}chi automaton over a bundle of $k$ infinite traces over $\EXT{\Sigma}$ is specified as $A = (Q, \EXT{\Sigma}^k, \iota, \Delta, F)$, where:
\begin{itemize}
\item $Q$ is the state space of the automaton;
\item $\EXT{\Sigma}^k$ is the alphabet of the automaton, each element is a $k$-vector over $\EXT{\Sigma}$;
\item $\iota \in Q$ is the initial state;
\item $\Delta \subseteq Q \times \EXT{\Sigma}^k \times Q$ is the transition relation;
\item $F \subseteq Q$ is the set of accepting states.
\end{itemize}

A \emph{run} of $A$ over a bundle of traces $t=(t_1,\ldots,t_k) \in (\Sigma^\omega)^k$ is an alternating sequence of states and symbols, of the form $(q_0=\iota),a_0,q_1,a_1,q_2,\ldots$ where for each $i$, $a_i = (t_1(i), \ldots, t_k(i))$  --- that is, $a_0,a_1,\ldots$ equals $\zip(t)$ --- and $(q_i,a_i,q_{i+1})$ is in the transition relation $\Delta$.  The run is accepting if a state in $F$ occurs infinitely often along it. The \emph{language} accepted by $A$, denoted by $\mathcal{L}(A)$, is the set of all $k$-trace bundles that are accepted by $A$.

\subsubsection{Automaton-Program Product}
In verification, the set of traces of a program that violate a property can be represented by an automaton that is the product of the program with an automaton for the negation of that property. Security properties may require analyzing multiple traces of a program; therefore, we define the analogous automaton as a product between an automaton $A$ for the negation of the security property and the $k$-fold composition $P^k$ of a program $P$. For simplicity, assume for now that the program $P$ contains no $\emptystring$-transitions.
Programs with $\emptystring$-transitions can be handled by converting $A$ over $\EXT{\Sigma}^k$ into a new automaton $\buff{A}$ over $(\EXT{\Sigma} \cup \{\emptystring\})^k$ (see Appendix for details).

Let $A = (Q^A, \EXT{\Sigma}^k, \Delta^A, \iota^A, F^A)$ be a B\"uchi automaton (over a $k$-trace bundle) and $P=(\pstates,\EXT{\Sigma},\pinit,\ptrans)$ be a program. The product of $A$ and $P^k$, written $A \times P^k$, is a B\"uchi automaton $B=(Q^B, \EXT{\Sigma}^k, \Delta^B, \iota^B, F^B)$, where:
\begin{itemize}
\item $Q^B=Q^A \times \pstates^k$;
\item $\iota^B = (\iota^A,(\pinit,\ldots,\pinit))$;
\item $((q,s),u,(q',s'))$ is in $\Delta^B$ if, and only if, $(q,u,q')$ is in $\Delta^A$, and $(s_i,u_i,s'_i)$ is in $(\ptrans)$ for all $i$;
\item $(q,s)$ is in $F^B$ iff $q$ is in $F^A$.
\end{itemize}

\begin{lemma} \label{lemma:product}
  Trace $\zip(t_1, \ldots, t_k)$ is in $\Lang(A \times P^k)$ if, and only if, $\zip(t_1, \ldots, t_k)$ is in $\Lang(A)$ and, for all $i$, $t_i=\trace(x_i)$ for some execution $x_i$ of $P$.
\end{lemma}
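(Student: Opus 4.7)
The plan is to prove both directions of the biconditional by exploiting the direct correspondence between runs of the product automaton $B = A \times P^k$ and tuples consisting of an accepting run of $A$ together with $k$ executions of $P$. Throughout, I rely on the fact that the definition of $\Delta^B$ decomposes cleanly into an $A$-component and $k$ synchronous $P$-components, and that the accepting condition on $B$ depends only on the $A$-component.

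For the forward direction, suppose $\zip(t_1,\ldots,t_k) \in \Lang(B)$. Then there is an accepting run $\rho = (q_0,s^0), a_0, (q_1,s^1), a_1, \ldots$ of $B$ on $\zip(t)$, where each $s^j = (s^j_1,\ldots,s^j_k) \in \pstates^k$. First I project $\rho$ onto its first coordinate to obtain $q_0, a_0, q_1, a_1, \ldots$; by the definition of $\Delta^B$, each $(q_j, a_j, q_{j+1})$ is in $\Delta^A$, and since $F^B = F^A \times \pstates^k$, the sequence visits $F^A$ infinitely often, so this is an accepting run of $A$ on $\zip(t)$, giving $\zip(t) \in \Lang(A)$. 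Next, fixing $i \in \{1,\ldots,k\}$ and letting $u^j_i$ be the $i$-th coordinate of $a_j$, the sequence $x_i = s^0_i, u^0_i, s^1_i, u^1_i, \ldots$ is an execution of $P$ by the second conjunct in the definition of $\Delta^B$ (and because the initial state of $B$ starts every component at $\iota$). By the run condition on $A$, $a_j$ equals $(t_1(j),\ldots,t_k(j))$, so $u^j_i = t_i(j)$, hence $\trace(x_i) = t_i$.

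For the reverse direction, assume $\zip(t) \in \Lang(A)$, witnessed by an accepting run $q_0, a_0, q_1, a_1, \ldots$ of $A$, and that for each $i$ there is an execution $x_i = s^0_i, u^0_i, s^1_i, u^1_i, \ldots$ of $P$ with $\trace(x_i) = t_i$. Because $A$'s run is on $\zip(t)$, we have $a_j = (t_1(j),\ldots,t_k(j)) = (u^j_1,\ldots,u^j_k)$, so the $a_j$ synchronize the symbols read by the $A$-run with those produced by each execution. Set $s^j = (s^j_1,\ldots,s^j_k)$ and consider $\rho = (q_0,s^0), a_0, (q_1,s^1), a_1, \ldots$. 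Each transition is in $\Delta^B$ because its two defining clauses are satisfied, the initial state is $(\iota^A, (\iota,\ldots,\iota))$, and $\rho$ visits $F^B$ infinitely often since $q_0, q_1, \ldots$ visits $F^A$ infinitely often. Thus $\rho$ is an accepting run of $B$ on $\zip(t)$, so $\zip(t) \in \Lang(B)$.

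I do not anticipate a real obstacle: the argument is bookkeeping on product constructions. The one point requiring care is the synchronization of symbols — ensuring that the $a_j$ read by $A$ coincide with the $u^j_i$ emitted by each $x_i$ — which is immediate from the defining property of a run over $\zip(t)$ and the assumption $\trace(x_i) = t_i$. The assumption that $P$ has no $\emptystring$-transitions is what lets each $P$-component step in lockstep with $A$; the general case is handled by the buffered automaton $\hat{A}$ mentioned just before the lemma and is outside the scope of this argument.
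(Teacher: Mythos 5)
Your proof is correct: both directions are the standard bookkeeping for a product construction, with the key synchronization point (that the symbols $a_j$ of the $A$-run coincide with the labels $u^j_i$ emitted by each execution $x_i$) handled properly via the run condition $a_j=(t_1(j),\ldots,t_k(j))$ and the assumption $\trace(x_i)=t_i$. The paper states this lemma without proof, treating it as a routine consequence of the definition of $A\times P^k$, and your argument is precisely the one that omission presupposes.
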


\subsubsection{Bisimulations} \label{sec:bisimulation}
For programs $S = (\pstates^S, \EXT{\Sigma}, \pinit^S, \ptrans^S)$ and $T = (\pstates^T, \EXT{\Sigma}, \pinit^T, \ptrans^T)$, and a subset $I$ of $\EXT{\Sigma}$, a relation $B \subseteq \pstates^T \times \pstates^S$ is a \emph{bisimulation} for $I$ if: 
\begin{enumerate}
\item $(\pinit^T, \pinit^S) \in B$;

\item For every $(t,s)$ in $B$ and $(t,v,t')$ in $(\ptrans^T)$ there is $u$ and $s'$ such that $(s,u,s')$ is in $(\ptrans^S)$ and $(t',s') \in B$ and $u =_I v$.

\item For every $(t,s)$ in $B$ and $(s,u,s')$ in $(\ptrans^S)$ there is $v$ and $t'$ such that $(t,v,t')$ is in $(\ptrans^T)$ and $(t',s') \in B$ and $u =_I v$.
\end{enumerate}

\section{Formulating Security Preservation} \label{sec:hyperproperties}

A temporal correctness property $\varphi$ is expressed as a set of infinite traces. Many security properties can only be described as properties of pairs or tuples of traces. A standard example is that of \emph{noninterference}, which  models potential leakage of secret inputs: if two program traces differ only in secret inputs, they should be indistinguishable to an observer that can only view non-secret inputs and outputs. The general notion is that of  a \emph{hyperproperty} \cite{DBLP:conf/sas/TerauchiA05,DBLP:conf/csfw/ClarksonS08}, which is a set containing sets of infinite traces; a program satisfies a hyperproperty $H$ if the set of all compressed traces of the program is an element of $H$. Linear Temporal Logic (LTL) is commonly used to express correctness properties. Our formulation of security properties is an extension of the logic HyperLTL, which can express common security properties including several variants of noninterference~\cite{DBLP:conf/post/ClarksonFKMRS14}.

A security property $\varphi$ has the form $(Q_1 \pi_1, \ldots, Q_n \pi_k: \kappa(\pi_1,\ldots,\pi_k))$, where the $Q_i$'s are first-order quantifiers over trace variables, and $\kappa$ is set of $k$-trace bundles, described by a B\"uchi automaton whose language is the \emph{complement} of $\kappa$. This formulation borrows the crucial notion of trace quantification from HyperLTL, while generalizing it, as automata are more expressive than LTL, and atomic propositions may hold of $k$-vectors rather than on a single trace.

The satisfaction of property $\varphi$ by a program $P$ is defined in terms of the following finite two-player game, denoted $\game(P,\varphi)$. The protagonist, Alice, chooses an execution of $P$ for each existential quantifier position, while the antagonist, Bob, chooses an execution of $P$ at each universal quantifier position. The choices are made in sequence, from the outermost to the innermost quantifier. A play of this game is a maximal sequence of choices. The outcome of a play is thus a ``bundle'' of program executions, say $\sigma = (\sigma_1,\ldots,\sigma_k)$. This induces a corresponding bundle of compressed traces, $t = (t_1,\ldots,t_k)$, where $t_i=\ctrace(\sigma_i)$ for each $i$. This play is a win for Alice if $t$ satisfies $\kappa$ and a win for Bob otherwise.

A \emph{strategy} for Bob is a function, say $\xi$, that defines a non-empty set of executions for positions $i$ where $Q_i$ is a universal quantifier, in terms of the earlier choices $\sigma_1,\ldots,\sigma_{i-1}$; the choice of $\sigma_i$ is from this set.
A strategy for Alice is defined symmetrically. A strategy is \emph{winning} for player $X$ if every play following the strategy is a win for $X$. This game is determined, in that for any program $P$ one of the players has a winning strategy. Satisfaction of a security property is defined by the following. 

\begin{definition}
  Program $P$ satisfies a security property $\varphi$, written $\models_P \varphi$, if the protagonist has a winning strategy in the game $\game(P,\varphi)$.
\end{definition}

\subsection{Secure Program Transformation} \label{sec:secure_transformation}

Let $S = (\pstates^S, \EXT{\Sigma}, \pinit^S, \ptrans^S)$ be the transition system representing the original \emph{source} program and let $T = (\pstates^T, \EXT{\Sigma}, \pinit^T, \ptrans^T)$ be the transition system for the transformed \emph{target} program. Any notion of secure transformation must imply the preservation property that if $S$ satisfies $\varphi$ and the transformation from $S$ to $T$ is secure for $\varphi$ then $T$ also satisfies $\varphi$. This property in itself is, however, too weak to serve as a definition of secure transformation.

Consider the transformation shown in Figure~\ref{fig:dse}, with \verb|use(x)| defined so that it terminates execution if the secret key \verb|x| is invalid. As the source program violates non-interference by leaking the validity of the key, the transformation would be trivially secure if the preservation property is taken as the definition of secure transformation. But that conclusion is wrong: the leak introduced in the target program is clearly different and of a more serious nature, as the entire secret key is now vulnerable to attack.

This analysis prompts the formulation of a stronger principle for secure transformation. (Similar principles have been discussed in the literature, e.g.,~\cite{DBLP:conf/ccs/FournetGR09}.) The intuition is that every instance and type of violation in $T$ should have a matching instance and type of violation in $S$. To represent different types of violations, we suppose that the negated property is represented by a collection of automata, each checking for a specific type of violation. 

\begin{definition}
A strategy $\xi^S$ for the antagonist in $\game(S,\varphi)$ (representing a violation in $S$) \emph{matches} a strategy $\xi^T$ for the antagonist in game $\game(T,\varphi)$ (representing a violation in $T$) if for every maximal play $u=u_1,\ldots,u_k$ following $\xi^T$, there is a maximal play $v=v_1,\ldots,v_k$ following $\xi^S$ such that (1) the two plays are input-equivalent, i.e., $u_i|_{\pinputs} = v_i|_{\pinputs}$ for all $i$, and (2) if $u$ is accepted by the $m$-th automaton for the negated property, then $v$ is accepted by the same automaton.
\end{definition}

\begin{definition}\label{def:preservation}
  A transformation from $S$ to $T$ preserves security property $\varphi$ if for every winning strategy for the antagonist in the game $\game(T,\varphi)$, there is a matching winning strategy for the antagonist  in the game $\game(S,\varphi)$. 
\end{definition}

As an immediate consequence, we have the preservation property. 
\begin{theorem}
  If a transformation from $S$ to $T$ preserves security property $\varphi$ and if $S$ satisfies $\varphi$, then $T$ satisfies $\varphi$.
\end{theorem}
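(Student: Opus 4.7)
The plan is to argue by contradiction, using determinacy of the game $\game(\cdot,\varphi)$ as the key tool. Assume the transformation from $S$ to $T$ preserves $\varphi$ and that $\models_S \varphi$; suppose toward contradiction that $T$ does not satisfy $\varphi$, i.e., Alice has no winning strategy in $\game(T,\varphi)$.

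Next I would invoke determinacy of $\game(T,\varphi)$, which the paper asserts holds for any program, to conclude that Bob has a winning strategy $\xi^T$ in $\game(T,\varphi)$. The preservation hypothesis (Definition~\ref{def:preservation}) then directly applies: there exists a matching winning strategy $\xi^S$ for the antagonist in $\game(S,\varphi)$. In particular, Bob has a winning strategy in $\game(S,\varphi)$.

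I would then derive the contradiction from $\models_S \varphi$. By definition, $\models_S \varphi$ gives Alice a winning strategy $\sigma$ in $\game(S,\varphi)$. Playing $\sigma$ against Bob's winning strategy $\xi^S$ produces a single maximal play, which by the winning conditions would simultaneously be a win for Alice and a win for Bob; since the winning condition on a play (whether the induced compressed trace bundle satisfies $\kappa$) is a two-valued partition of plays, this is impossible. Hence the assumption $T \not\models \varphi$ is false, and $\models_T \varphi$ as claimed.

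The only subtle step is the appeal to determinacy and the observation that the matching clause of Definition~\ref{def:preservation} transfers the \emph{winning} nature of the antagonist's strategy from $T$ to $S$ (not merely its existence as a strategy). This follows because matching requires that every maximal play following $\xi^T$ has an input-equivalent counterpart following $\xi^S$ that is accepted by the same violation automaton, so any winning play for Bob in $\game(T,\varphi)$ yields a winning play for Bob in $\game(S,\varphi)$, making $\xi^S$ winning whenever $\xi^T$ is. With this observation the argument is essentially a one-line chase through the definitions and presents no technical obstacle beyond ensuring that determinacy and the single-valuedness of the winning condition are cited explicitly.
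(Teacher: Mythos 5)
Your proof is correct and is essentially the argument the paper intends when it presents the theorem as an ``immediate consequence'' of Definition~\ref{def:preservation}: determinacy gives the antagonist a winning strategy in $\game(T,\varphi)$, preservation supplies a matching \emph{winning} strategy for the antagonist in $\game(S,\varphi)$, and this contradicts $\models_S \varphi$ since both players cannot have winning strategies in the same game. One small correction to your closing remark: the winning nature of $\xi^S$ does \emph{not} follow from the matching condition alone (matching only asserts, for each maximal play of $\xi^T$, the \emph{existence} of some counterpart play following $\xi^S$, which cannot show that \emph{every} play of $\xi^S$ is a win for the antagonist); rather, Definition~\ref{def:preservation} stipulates outright that the matching strategy is winning, so that step needs no justification and your proposed derivation of it would not go through.
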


In the important case where the security property is purely universal, of the form $\forall \pi_1, \ldots, \forall \pi_k : \kappa(\pi_1,\ldots,\pi_k)$, a winning strategy for the antagonist is simply a bundle of $k$ traces, representing an assignment to $\pi_1, \ldots, \pi_k$ that falsifies $\kappa$.

\section{Refinement for Preservation of Universal Properties} \label{sec:refinement}

We define an automaton-based refinement scheme that is sound for purely-universal properties  $\varphi$, of the form $(\forall \pi_1, \ldots, \forall \pi_k: \kappa(\pi_1,\ldots,\pi_k))$. Section~\ref{sec:general-witness} generalizes this to properties with arbitrary quantifier prefixes. We assume for simplicity that programs $S$ and $T$ have no $\emptystring$-transitions; we discuss how to remove this assumption at the end of the section. An automaton-based refinement scheme for preservation of $\varphi$ is defined below.

\begin{definition} \label{def:refinement}
Let $S,T$ be programs over the same alphabet, $\EXT{\Sigma}$, and $A$ be a B\"{u}chi automaton over $\EXT{\Sigma}^k$. Let $\pinputs$ be a subset of $\EXT{\Sigma}$. A relation $R \subseteq (Q^A \times (\pstates^T)^k)  \times (Q^A \times (\pstates^S)^k)$ is a \emph{refinement relation} from $A \times T^k$ to $A \times S^k$ for $\pinputs$ if 
\begin{enumerate}
\item Initial configurations are related, i.e., $((\iota^A, \pinit^{T^k}), (\iota^A,\pinit^{S^k}))$ is in $R$, and 
\item Related states have matching transitions. That is, if $((q,t),(p,s)) \in R$ and $((q,t), v, (q',t')) \in$$\Delta^{A\times T^k}$, there are $u$, $p'$, and $s'$ such that the following hold:
  \begin{enumerate}
  \item $((p,s),u,(p',s'))$ is a transition in $\Delta^{A \times S^k}$;
  \item $u$ and $v$ agree on $\pinputs$, that is, $u_i =_\pinputs v_i$ for all $i$; 
  \item the successor configurations are related, i.e., $((q',t'), (p',s')) \in R$; and
  \item acceptance is preserved, i.e., if $q' \in F$ then $p' \in F$.
  \end{enumerate}
\end{enumerate}
\end{definition}

\begin{lemma} \label{lemma:typeI-refinement}
  If there exists a refinement from $A \times T^k$ to $A \times S^k$ then, for every sequence $v$ in $\Lang(A \times T^k)$, there is a  sequence $u$ in $\Lang(A \times S^k)$ such that $u$ and $v$ are input-equivalent.
\end{lemma}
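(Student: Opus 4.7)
The plan is a standard simulation-style construction: take an accepting run of $A \times T^k$ whose trace is $v$ and use the refinement relation $R$ to inductively build a matching accepting run of $A \times S^k$ whose trace $u$ is input-equivalent to $v$.

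First I would unpack the hypothesis: since $v \in \Lang(A \times T^k)$, there is an accepting run
\[
\rho^T = (q_0,t_0), v_0, (q_1,t_1), v_1, (q_2, t_2), \ldots
\]
starting at $(q_0,t_0) = (\iota^A, \pinit^{T^k})$ such that $q_i \in F^A$ for infinitely many $i$. I would then construct an accompanying run $\rho^S$ of $A \times S^k$ by induction on $i$, maintaining the invariant that $((q_i,t_i),(p_i,s_i)) \in R$. The base case is clause (1) of Definition~\ref{def:refinement}, which gives $(p_0,s_0) = (\iota^A, \pinit^{S^k})$ related to $(q_0,t_0)$. For the inductive step, given that $((q_i,t_i),(p_i,s_i)) \in R$ and the transition $((q_i,t_i), v_i, (q_{i+1}, t_{i+1}))$ belongs to $\Delta^{A \times T^k}$, clause (2) supplies a label $u_i$ and a successor $(p_{i+1}, s_{i+1})$ with $((p_i,s_i), u_i, (p_{i+1}, s_{i+1})) \in \Delta^{A \times S^k}$, satisfying $u_i =_{\pinputs} v_i$ componentwise and $((q_{i+1}, t_{i+1}), (p_{i+1}, s_{i+1})) \in R$.

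Having constructed $\rho^S$, let $u = u_0, u_1, \ldots$ be its trace. Three things remain to check. (a) $\rho^S$ is indeed a run of $A \times S^k$ starting at the initial configuration: this follows from the base case and the use of clause (2a) at each inductive step. (b) $u$ is input-equivalent to $v$: clause (2b) guarantees $u_{i,j} =_{\pinputs} v_{i,j}$ for every index $i$ and trace component $j$, which is exactly the required componentwise input agreement. (c) $\rho^S$ is accepting: clause (2d) ensures that whenever $q_{i+1} \in F^A$ we also have $p_{i+1} \in F^A$, so the set of indices at which $\rho^S$ visits $F^{A \times S^k}$ contains the corresponding set for $\rho^T$ (modulo the first index); since the latter is infinite, so is the former. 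Combining (a) and (c), $u \in \Lang(A \times S^k)$, and by (b) it is input-equivalent to $v$.

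I do not anticipate a genuine obstacle here; the proof is essentially a bookkeeping exercise. The only subtlety worth being careful about is the treatment of acceptance: clause (2d) is phrased in terms of the $A$-component $q'$ rather than the full product state $(q',t')$, so I would spell out that acceptance in the product automata $A \times T^k$ and $A \times S^k$ depends only on the $A$-component, which makes the preservation of the Büchi condition immediate from the inductive construction.
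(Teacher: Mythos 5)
Your proposal is correct and follows essentially the same argument as the paper's proof in the appendix: an induction along the accepting run of $A \times T^k$ using the clauses of Definition~\ref{def:refinement} to build a matching run of $A \times S^k$, then checking input agreement and preservation of the B\"uchi acceptance condition. The extra care you take with clause (2d) and the fact that acceptance in the product depends only on the $A$-component is a reasonable elaboration of the same bookkeeping.
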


\begin{theorem}[Universal Refinement] \label{thrm:refinement}
Let $\varphi = (\forall \pi_1, \ldots, \pi_k: \kappa(\pi_1,\ldots,\pi_k))$ be a universal security property; $S$ and $T$ be programs over a common alphabet $\EXT{\Sigma} = \Sigma \cup \EXT{\poutputs}$; $A = (Q, \EXT{\Sigma}^k, \iota, \Delta, F)$ be an automaton for the negation of $\kappa$; and $R \subseteq (Q \times (\pstates^T)^k) \times (Q \times (\pstates^S)^k)$ be a refinement relation from $A \times T^k$ to $A \times S^k$ for $\pinputs$. Then, the transformation from $S$ to $T$ preserves $\varphi$. 
\end{theorem}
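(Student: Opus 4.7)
The plan is to reduce the theorem to Lemma~\ref{lemma:typeI-refinement} by translating back and forth between the game-theoretic language of strategies and the language-theoretic language of automata. First I would unpack what a winning antagonist strategy $\xi^T$ in $\game(T,\varphi)$ looks like in the purely universal case: every maximal play is just a bundle $u = (u_1, \ldots, u_k)$ of $T$-executions whose compressed-trace bundle falsifies $\kappa$, which (by the automaton representation of $\neg\kappa$ together with Lemma~\ref{lemma:product} applied to $T$) is equivalent to the zipped compressed trace $\tilde u := \zip(\ctrace(u_1), \ldots, \ctrace(u_k))$ lying in $\Lang(A \times T^k)$.

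The core step would then be to feed each such $\tilde u$ into Lemma~\ref{lemma:typeI-refinement}, obtaining an input-equivalent $\tilde v \in \Lang(A \times S^k)$. Unzipping and applying Lemma~\ref{lemma:product} in the reverse direction on $S$, I can realize $\tilde v$ as the zipped compressed trace of a bundle $v = (v_1, \ldots, v_k)$ of $S$-executions whose compressed-trace bundle lies in $\Lang(A)$ -- that is, $v$ also violates $\kappa$. The input-equivalence of $\tilde u$ and $\tilde v$ translates coordinate-wise into $\ctrace(v_i)|_{\pinputs} = \ctrace(u_i)|_{\pinputs}$ for every $i$, which is exactly the input-equivalence clause of the matching definition; the acceptance clause is immediate since there is only the single automaton $A$, and both $u$ and $v$ are accepted by it.

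The remaining work is to assemble these matches into an honest strategy $\xi^S$ for the antagonist in $\game(S,\varphi)$. Let $V$ denote the collection of bundles $v$ produced above as $u$ ranges over plays of $\xi^T$; I would define $\xi^S$ recursively by specifying, at each reachable history $(v_1, \ldots, v_{i-1})$, the set $\{v_i : (v_1, \ldots, v_k) \in V \text{ for some extension } v_{i+1}, \ldots, v_k\}$. Nonemptiness along any reachable history follows by induction from the nonemptiness of $V$, so $\xi^S$ is a well-defined strategy; its plays are precisely the elements of $V$, each of which violates $\kappa$, making $\xi^S$ winning, and by construction it matches $\xi^T$.

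The main obstacle I anticipate is purely organizational: promoting a set of matching plays into a prefix-closed, causally consistent strategy. Once the matches are collected into $V$, the inductive construction above handles this cleanly. Everything else is bookkeeping between the two presentations of the game -- language membership on one side and execution bundles on the other -- with Lemmas~\ref{lemma:product} and~\ref{lemma:typeI-refinement} supplying the actual content.
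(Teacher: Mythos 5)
Your proposal is correct and follows essentially the same route as the paper's proof: Lemma~\ref{lemma:product} to pass from a violating bundle of $T$-executions to a word of $\Lang(A \times T^k)$, Lemma~\ref{lemma:typeI-refinement} to obtain an input-equivalent word of $\Lang(A \times S^k)$, and Lemma~\ref{lemma:product} again to recover a violating, input-equivalent bundle of $S$-executions. The only difference is that you spell out the reassembly of the matched bundles into a bona fide antagonist strategy, whereas the paper sidesteps this by observing (just before Section~\ref{sec:refinement}) that for purely universal properties a winning antagonist strategy is simply a bundle of $k$ traces.
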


\begin{proof}
  A violation of $\varphi$ by $T$ is given by a bundle of executions of $T$ that violates $\kappa$. We show that there is an input-equivalent bundle of executions of $S$ that also violates $\kappa$. Let $x = (x_1,\ldots,x_k)$ be a bundle of executions of $T$ that does not satisfy $\kappa$. By Lemma~\ref{lemma:product}, $v = \zip(\trace(x_1),\ldots,\trace(x_k))$ is accepted by $A \times T^k$. By Lemma~\ref{lemma:typeI-refinement}, there is a sequence $u$ accepted by $A \times S^k$ that is input-equivalent to $v$. Again by Lemma~\ref{lemma:product}, there is a bundle of executions $y = (y_1,\ldots,y_k)$ of $S$ such that $u = \zip(\trace(y_1),\ldots,\trace(y_k))$ and $y$ violates $\kappa$. As $u$ and $v$ are input equivalent, $\trace(x_i)$ and $\trace(y_i)$ are input-equivalent for all $i$, as required.
\qed
\end{proof}

The refinement proof rule for universal properties is implicit: a witness is a relation $R$ from $A \times T^k$ to $A \times S^k$; this is valid if it satisfies the conditions set out in Definition~\ref{def:refinement}. The theorem establishes the soundness of this proof rule. Examples of witnesses for specific compiler transformations are given in Section~\ref{sec:refinement-check}, which also discusses SMT-based checking of the proof requirements.

To handle programs that include $\emptystring$-transitions, we can convert the automaton $A$ over $\EXT{\Sigma}^k$ into a \emph{buffering automaton} $\buff{A}$ over $(\EXT{\Sigma} \cup \{\emptystring\})^k$, such that $\buff{A}$ accepts $\zip(v_1,\ldots,v_k)$ iff $A$ accepts $\zip(\compress(v_1),\ldots,\compress(v_k))$. The refinement is then defined over $\buff{A} \times S^k$ and $\buff{A} \times T^k$. Details can be found in the Appendix. Another useful extension is the addition of \emph{stuttering}, which can be necessary for example when a transformation removes instructions. Stuttering relaxes Definition~\ref{def:refinement} to allow multiple transitions on the source to match a single transition on the target, or vice-versa. This is a standard technique for verification~\cite{DBLP:journals/tcs/BrowneCG88} and one-step formulations suitable for SMT solvers are known (cf.~\cite{DBLP:conf/fsttcs/Namjoshi97,DBLP:conf/popl/Leroy06}).

\section{Checking Transformation Security} \label{sec:refinement-check}

This section first describes how to construct the SMT formula that checks the correctness of a given refinement relation. Next, it demonstrates through concrete examples how to express a refinement relation for specific program transformations.

\subsection{Refinement Check} \label{sec:smt-check}

Assume that the refinement relation $R$, the transition relations $\Delta$, $(\ptrans_T)$ and $(\ptrans_S)$ and the set of accepting states $F$ are described by SMT formulas over variables ranging over states and alphabet symbols.

To verify that the formula $R$ is indeed a refinement, we perform an inductive check following the definition of refinement given in Definition~\ref{def:refinement}. To prove the base case, which says that the initial states of $A \times T^k$ and $A \times S^k$ are related by $R$, we simply evaluate the formula on the initial states.

Proving the inductive step again follows from Definition~\ref{def:refinement}, which states that the transition relations of the automata must preserve membership in $R$. The correctness of the inductive step would be expressed by an SMT query of the shape $(\forall q^T, q^S, p^T,  t, s, t', \sigma^T:  (\exists \sigma^S, p^S, s' : \varphi_1 \rightarrow \varphi_2))$, where:
\begin{alignat*}{2}
&\varphi_1 \equiv
&&R((q^T, t), (q^S, s)) \land \Delta(q^T, \sigma^T, p^T) \land \bigwedge^k_{i=1} (t_i \xrightarrow{\sigma^T_i}_T t'_i) \\
&\varphi_2 \equiv
&&\Delta(q^S, \sigma^S, p^S) \land \bigwedge^k_{i=1} (s_i \xrightarrow{\sigma^S_i}_S s'_i) \land \bigwedge^k_{i=1} (\sigma^T_i =_\pinputs \sigma^S_i) \\
& &&\land R((p^T, t'), (p^S, s')) \land (F(p^T) \rightarrow F(p^S))
\end{alignat*}

Note that this formula has a quantifier alternation, which is hard for SMT solvers to handle. However, the formula can be reduced to a validity check by providing Skolem functions from the universal to the existential variables. We expect the compiler to provide these functions. As we will see in the examples below, in many cases the compiler can choose simple-enough Skolem functions that the validity of the formula can be verified using only equality reasoning, making it unnecessary to even expand $\Delta$ and $F$ to their definitions. More generally, a compiler writer must have a proof in mind for each optimization and should therefore be able to provide the necessary Skolem functions to match the refinement relation.

\subsection{Refinement Relations for Compiler Optimizations}

We consider three common optimizations below. In addition, further examples for \emph{dead-branch elimination}, \emph{expression flattening}, \emph{loop peeling} and \emph{register spilling} can be found in the Appendix. All transformations were based on the examples in~\cite{DBLP:conf/csfw/BartheGL18}.

\subsubsection{Example 1: Constant Folding} \label{sec:constant-folding}

Section~\ref{sec:example} presented an example of a program transformation by constant folding. We now proceed to show how a refinement relation can be defined to serve as a witness for the security of this transformation, so its validity can be checked using an SMT solver as described above.

Recall that states of $S$ and $T$ are of the form $(\alpha, \ell)$, where $\alpha : \mathcal{V} \rightarrow \textsc{Int}$ and $\ell \in \textsc{Loc}$.
Then, $R$ can be expressed by the following formula over states $q^T, q^S$ of the automaton $A$ and states $t$ of $T^k$ and $s$ of $S^k$, where $t_i = (\alpha^T_i, \ell^T_i)$:
\begin{align*} \label{eq:refinement}
&(q^T = q^S) \land (t = s)
\land \bigwedge^k_{i = 1} (\ell^T_i = \texttt{L3} \rightarrow \alpha^T_i(\texttt{y}) = 42) \\
&\land \bigwedge^k_{i = 1} (\ell^T_i = \texttt{L4} \rightarrow \alpha^T_i(\texttt{z}) = 1) \land \bigwedge^k_{i = 1} (\ell^T_i = \texttt{L5} \rightarrow \alpha^T_i(\texttt{x}) = 0)
\end{align*}

Since this is a simple transformation, equality between states is all that is needed to establish a refinement. However, to allow the refinement to be verified automatically, the relation also has to carry information about the constant values at specific points in the program. In general, if the transformation relies on the fact that at location $\ell$ variable $v$ has constant value $c$, the constraint $\bigwedge^k_{i = 1} (\ell^T_i = \ell \rightarrow \alpha^T_i(v) = c)$ is added to $R$.

$R$ can be checked using the SMT query described in Section~\ref{sec:smt-check}. Note that for this particular transformation the compiler can choose Skolem functions that assign $\sigma^S = \sigma^T$ and $p^S = p^T$. In this case, from $(q^T = q^S)$ (given by the refinement relation) and $\Delta(q^T, \sigma^T, p^T)$ the solver can automatically infer $\Delta(q^S, \sigma^S, p^S)$, $(\sigma^T_i =_\pinputs \sigma^S_i)$ and $F(p^T) \rightarrow F(p^S)$ using only equality reasoning. Therefore, the refinement check is \emph{independent} of the security property in this case. This applies to many other transformations that occur in practice as well, since their reasons for preserving security are usually simple.

\subsubsection{Example 2: Common-Branch Factorization}

Common-branch factorization is a program optimization applied to conditional blocks where the instructions at the beginning of the \emph{then} and \emph{else} blocks are the same. If the condition does not depend on a variable modified by the common instruction, this instruction can be moved outside of the conditional.
Consider for example:

\begin{multicols}{2}
\begin{verbatim}
// Source program S
L1: if (j < arr_size) {
L2:     a := arr[0];
L3:     b := arr[j];
L4: } else {
L5:     a := arr[0];
L6:     b := arr[arr_size - 1];
L7: }
\end{verbatim}

\columnbreak

\begin{verbatim}
// Target program T
L1: a := arr[0];
L2: if (j < arr_size) {
L3:     b := arr[j];
L4: } else {
L5:
L6:     b := arr[arr_size - 1];
L7: }
\end{verbatim}
\end{multicols}

Suppose that the attack model allows the attacker to observe memory accesses, represented by the index \texttt{j} of every array access \texttt{arr[j]}. We assume that other variables are stored in registers rather than memory (see Appendix for a discussion on register spilling). Under this attack model the compressed traces produced by $T$ are identical to the ones of $S$, therefore the transformation will be secure regardless of the security property $\varphi$. However, because the order of instructions is different, a more complex refinement relation $R$ will be needed, compared to constant folding:
\begin{align*}
((t = s) \land (q^T = q^S)) \lor &\bigwedge^k_{i=1}((\ell^T_i = \texttt{L2}) \\
&\land ((\alpha^S_i(\texttt{i}) < \alpha^S_i(\texttt{arr\_size}))\,?\,(\ell^S_i = \texttt{L2}) : (\ell^S_i = \texttt{L5})) \\
&\land (\alpha^T_i = \alpha^S_i[\texttt{a} := \texttt{arr[0]}])) \land \Delta(q^S, (0, \ldots, 0), q^T)
\end{align*}

The refinement relation above expresses that the states of the programs and the automata are identical except when $T$ has executed the factored-out instruction but $S$ hasn't yet. At that point, $T$ is at location $L2$ and $S$ is either at location $L2$ or $L5$, depending on how the guard was evaluated. Note that it is necessary for $R$ to know that the location of $S$ depends on the evaluation of the guard, so that it can verify that at the next step $T$ will follow the same branch. The states of $\buff{A} \times S^k$ and $\buff{A} \times T^k$ are then related by saying that after updating $\texttt{a := arr[0]}$ on every track of $S$ the two states will be the same. Note that since this instruction produces an observation representing the index of the array access, the states of the automata are related by $\Delta(q^S, (0, \ldots, 0), q^T)$, indicating that the access has been observed by $\buff{A} \times T^k$ but not yet by $\buff{A} \times S^k$.

\subsubsection{Example 3: Switching Instructions}

This optimization switches two sequential instructions if the compiler can guarantee that the program's behavior will not change.
For example, consider the following source and target programs:

\begin{multicols}{2}
\begin{verbatim}
// Source program S
L1: int a[10], b[10];
L2: a[0] := secret_input();
L3: b[0] := secret_input();
L4: for(int j:=1; j<10; j++){
L5:   a[j] := b[j-1];
L6:   b[j] := a[j-1];
L7:   public_output(j);
L8: }
\end{verbatim}

\columnbreak

\begin{verbatim}
  // Target program T
  L1: int a[10], b[10];
  L2: a[0] := secret_input();
  L3: b[0] := secret_input();
  L4: for(int j:=1; j<10; j++){
  L5:   b[j] := a[j-1];
  L6:   a[j] := b[j-1];
  L7:   public_output(j);
  L8: }
\end{verbatim}
\end{multicols}

Note that the traces produced by $T$ and $S$ are identical. Therefore, a refinement relation for this pair of programs can be given by the following formula, regardless of the security property under verification:
\begin{align*}
    (q^S = q^T) \land & \bigwedge^k_{i=1} (\ell^S_i = \ell^T_i)
     \land (\ell^S_i \neq \texttt{L6} \rightarrow \alpha^S_i = \alpha^T_i) \\
    & \land (\ell^S_i = \texttt{L6} \rightarrow \alpha^S_i[\texttt{b[j]} := \texttt{a[j-1]}] = \alpha^T_i[\texttt{a[j]} := \texttt{b[j-1]}])
\end{align*}

The formula expresses that the state of the source and target programs is the same except between executing the two switched instructions. At that point, the state of the two programs is related by saying that after executing the second instruction in each of the programs they will again have the same state.

More generally, a similar refinement relation can be used for any source-target pair that satisfies the assumptions that (a) neither of the switched instructions produces an observable output, and (b) after both switched instructions are executed, the state of the two programs is always the same. All that is necessary in this case is to replace $\texttt{L6}$ by the appropriate location $\ell^S_{switch}$ where the switch happens and $\alpha^S_i[\texttt{b[j]} := \texttt{a[j-1]}] = \alpha^T_i[\texttt{a[j]} := \texttt{b[j-1]}]$ by an appropriate formula $\delta(\alpha^S_i, \alpha^T_i)$ describing the relationship between the states of the two programs at that location.

If the instructions being switched do produce observations, setting up the refinement relation becomes harder. This is due to the fact that the relationship $(q^S = q^T)$ might not hold in location $\ell^S_{switch}$, but expressing the true relationship between $q^S$ and $q^T$ is complex and might require knowledge of the state of all copies of $S$ and $T$ at once. It becomes simpler for some special cases, for example if the different copies of $S$ and $T$ are guaranteed to be synchronized (i.e., it is always the case that $\ell^S_i = \ell^S_j$ and $\ell^T_i = \ell^T_j$, for all $i$ and $j$), or if only one of the instructions produces an observation. Details can be found in the Appendix.

\section{Connections to Existing Proof Rules}

We establish connections to known proof rules for preservation of the non-interference \cite{amorim-et-al-2014,deng-namjoshi-SAS-2016,DBLP:conf/eurosp/MurraySE18} and constant-time~\cite{DBLP:conf/csfw/BartheGL18} properties. We show that under the assumptions of those rules, there is a simple and direct definition of a relation that meets the automaton-based refinement conditions for automata representing these properties. The automaton-based refinement method is thus general enough to serve as a uniform replacement for the specific proof methods.

\subsection{Constant Time}

We first consider the lockstep CT-simulation proof rule introduced in~\cite{DBLP:conf/csfw/BartheGL18} to show preservation of the constant-time property. For lack of space, we refer the reader to the original paper for the precise definitions of observational non-interference (Definition 1), constant-time as observational non-interference (Definition 4), lockstep simulation (Definition 5, denoted $\approx$), and lockstep CT-simulation (Definition 6, denoted $(\equiv_S,\equiv_C)$).

We do make two minor adjustments to better fit the automaton notion, which is based on trace rather than state properties. First, we add a dummy initial source state $\hat{S}(i)$ with a transition with input label $i$ to the actual initial state $S(i)$; and similarly for the target program, $C$. Secondly, we assume that a final state has a self-loop with a special transition label, $\bot$. Then the condition $(b \in S_f \leftrightarrow b' \in S_f)$ from Definition 1 in~\cite{DBLP:conf/csfw/BartheGL18} is covered by the (existing) label equality $t=t'$. With these changes, the observational non-interference property can be represented in negated form by the  automaton shown in Figure~\ref{fig:ct-automaton}, which simply looks for a sequence starting with an initial pair of input values satisfying $\phi$ and ending in unequal transition labels. The states are $I$ (initial), $S$ (sink), $M$ (mid), and $F$ (fail), which is also the accepting state. 

\begin{figure}
  \begin{center}
    \includegraphics[scale=0.7]{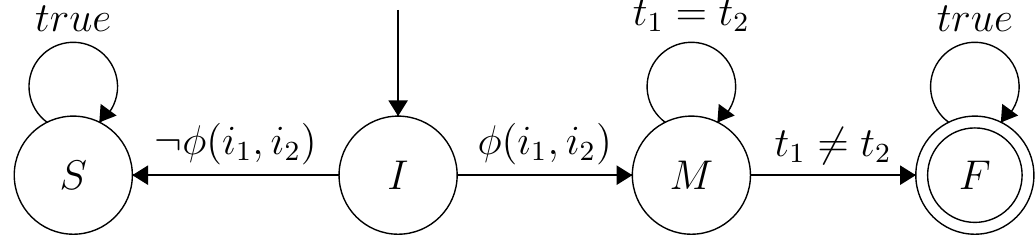}
  \end{center}
  \caption{A B\"uchi automaton for the negation of the constant-time property.}
  \label{fig:ct-automaton}
\end{figure}

We now define the automaton-based relation, using the notation in Theorem 1 of~\cite{DBLP:conf/csfw/BartheGL18}. Define relation $R$ by $(q,\alpha,\alpha') R (p,a,a')$ if $a \approx \alpha$, $a' \approx \alpha'$, and 
\begin{enumerate}
\item $p=F$, i.e., $p$ is the fail state, or
\item $p=q=S$, or
\item $p=q=I$, and $\alpha=\hat{C}(i)$, $\alpha'=\hat{C}(i')$, $a=\hat{S}(i)$, $a=\hat{S}(i')$, for some $i,i'$, or
\item $p=q=M$, and  $\alpha \equiv_C \alpha'$, and $a \equiv_S a'$.
\end{enumerate}

\begin{theorem}\label{thm:ct-simulation}
  If $(\equiv_S,\equiv_C)$ is a lockstep CT-simulation with respect  to the lockstep simulation $\approx$, the relation $R$ is a valid refinement relation.
\end{theorem}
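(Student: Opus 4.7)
The plan is to verify the two conditions of Definition~\ref{def:refinement} for the relation $R$, specialized to $k=2$ traces since constant-time is a two-trace property. The initial condition is immediate: both automata start in state $I$, both program copies sit in their dummy initial configurations, and case~3 of $R$ is tailored to describe exactly this configuration. The inductive condition is then verified by case analysis on which of the four cases of $R$ currently holds at the related pair of configurations $((q,\alpha,\alpha'),(p,a,a'))$.

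Cases~1 and~2 are straightforward. In case~1 ($p=F$), we keep $p'=F$ regardless of where the compiled side moves; matching source program steps are supplied by the totality of the source transition relation together with the hypotheses $a\approx\alpha$ and $a'\approx\alpha'$, which let us realize each compiled step as an input-equivalent source step. Acceptance preservation is trivial because $F$ is an accepting sink. Case~2 ($p=q=S$) is similar but with the non-accepting sink $S$, making acceptance preservation vacuous. Case~3 ($p=q=I$) requires a small case split on the single automaton transition out of $I$: its destination depends only on whether the pair of input labels satisfies $\phi$, so the source automaton can always mirror the compiled one's choice; on the program side we use the dummy-to-real transitions $\hat{S}(i) \to S(i)$ and $\hat{C}(i) \to C(i)$ on matching input labels. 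When the destination is $M$, case~4 of $R$ requires $C(i)\equiv_C C(i')$ and $S(i)\equiv_S S(i')$, which is precisely the initial guarantee a lockstep CT-simulation is assumed to provide on pairs of inputs satisfying $\phi$.

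The main obstacle is case~4 ($p=q=M$), where the full force of the lockstep CT-simulation hypothesis is brought to bear. A transition of $A \times C^2$ bundles two lockstep compiled steps from $\alpha,\alpha'$ to some $\beta,\beta'$ producing labels $t,t'$, driving the automaton back to $M$ if $t=t'$, to the accepting state $F$ if $t\neq t'$, or to the sink $S$ on termination. From the hypotheses $a\approx\alpha$, $a'\approx\alpha'$, $a\equiv_S a'$, and $\alpha\equiv_C\alpha'$, the lockstep CT-simulation yields matching source steps to $b,b'$ with labels $s,s'$ such that $b\approx\beta$, $b'\approx\beta'$, $b\equiv_S b'$, $\beta\equiv_C\beta'$, the inputs of $s,s'$ agree with those of $t,t'$, and crucially $s=s'$ if and only if $t=t'$. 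This last equivalence is exactly what is needed to make the source automaton track the compiled automaton: when labels are equal on both sides, the source also stays in $M$ and the successor lies in case~4; when labels differ on both sides, the source also reaches $F$ and the successor lies in case~1, preserving acceptance. The delicate step, and the place where care is needed in the formalization, is matching the two-traces-at-once phrasing of lockstep CT-simulation against the single product-transition phrasing of Definition~\ref{def:refinement}, together with verifying that the $\bot$ self-loops at final states leave the relation in a stable case (either~1 or~2) thereafter.
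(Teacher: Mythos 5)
Your overall structure---check the initial condition, then do a case analysis on the four clauses of $R$, with the program-level matching supplied by $\approx$ and the real work concentrated in the $p=q=M$ case---is the same as the paper's proof. Cases 1, 2, and 3 are handled essentially as the paper handles them, and your observation that the $I$-case reduces to the initial guarantee of the lockstep CT-simulation on input pairs satisfying $\phi$ is exactly the paper's argument.

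There is, however, a gap in your case 4. You assert that the lockstep CT-simulation yields matching source steps with ``$s=s'$ if and only if $t=t'$'' together with $b\equiv_S b'$ and $\beta\equiv_C\beta'$ unconditionally. Definition 6 of Barthe et al.\ gives only one direction: \emph{if} the source leakages are equal, \emph{then} the target leakages are equal and the successor states are related by $\equiv_S$ and $\equiv_C$. It does not rule out the situation where the compiled labels coincide but the source labels differ, nor does it say anything about $\equiv$-relatedness of successors when the source labels differ. Consequently your two-way case split (``equal on both sides'' versus ``differ on both sides'') is not exhaustive as justified: the case $t=t'$ on the compiled side with $s\neq s'$ on the source side is left uncovered. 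The paper avoids this by splitting on the \emph{source} labels instead: if they differ, the source automaton moves to $F$ and clause 1 of $R$ relates the successors regardless of where the compiled automaton goes (acceptance preservation is vacuous or trivially satisfied since $p'=F$); if they are equal, CT-simulation forces the compiled labels to be equal and delivers the $\equiv$-relations needed for clause 4. Your argument reaches the right conclusion but needs this reorientation of the case split---or an explicit appeal to clause 1 of $R$ for the asymmetric case---to be sound. A second, minor inaccuracy: the $\bot$ self-loops do not drive the automaton from $M$ to the sink $S$; termination mismatches are detected as unequal labels (sending the source run to $F$), and matched termination keeps both runs looping in $M$.
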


\begin{proof}

  Every initial state of $\mathcal{A} \times C^2$ has a related initial state in $\mathcal{A} \times S^2$. 
  As related configurations are pairwise connected by $\approx$, which is a simulation, it follows that any pairwise transition from a $C$-configuration is matched by a pairwise transition from the related $S$-configuration, producing states $b,b'$ and  $\beta,\beta'$ that are pointwise related by $\approx$. These transitions have identical input labels, as the only transitions with input labels are those from the dummy initial states. 

  The remaining question is whether the successor configurations are connected by $R$. We reason by cases. 

  First, if $p=F$, then $p'$ is also $F$. Hence, the successor configurations are related. This is also true of the second condition, where $p=q=S$, as the successor states are $p'=q'=S$.

  If $p=q=I$ the successor states are $\beta=C(i),\beta'=C(i')$ and $b=S(i),b'=S(i')$, and the successor automaton state is either $p'=q'=S$, if $\phi(i,i')$ does not hold, or $p'=q'=M$, if it does. In the first possibility, the successor configurations are related by the second condition; in the second, they are related by the final condition, as  $C(i) \equiv_C C(i')$ and $S(i) \equiv_S S(i')$ hold if $\phi(i,i')$ does [Definition 6 of~\cite{DBLP:conf/csfw/BartheGL18}].

  Finally, consider the interesting case where $p=q=M$. Let $\tau,\tau'$ be the transition labels on the pairwise transition in $C$, and let $t,t'$ be the labels on the corresponding pairwise transition in $S$. We consider two cases:

  (1) Suppose $t \neq t'$. Then $p'=F$ and the successor configurations are related, regardless of $p'$.

  (2) Otherwise, $t=t'$ and $p'=M$. By CT-simulation [Definition 6 of of~\cite{DBLP:conf/csfw/BartheGL18}: $a\equiv_S a'$ and $\alpha \equiv_C \alpha'$ by the relation $R$], it follows that $b \equiv_S b'$ and $\beta \equiv_C \beta'$ hold, and $\tau=\tau'$. Thus, the successor automaton state on the $C$-side is $q'=M$ and the successor configurations are related by the final condition.
  
  This completes the case analysis. Finally, the definition of $R$ implies that if $q=F$ then $p=F$, as required.

\qed
\end{proof}

\subsection{Non-Interference}
Refinement-based proof rules for preservation of non-interference have been introduced in~\cite{amorim-et-al-2014,deng-namjoshi-SAS-2016,DBLP:conf/eurosp/MurraySE18}. The rules are not identical but are substantially similar in nature, providing conditions under which an ordinary simulation relation, $\prec$, between programs $C$ and $S$ implies preservation of non-interference. We choose the rule from~\cite{deng-namjoshi-SAS-2016}, which requires, in addition to the requirement that $\prec$ is a simulation preserving input and output events, that  (a) A final state of $C$ is related by $\prec$ only to a final state of $S$ (precisely, both are final or both non-final), and (b)  If $t_0 \prec s_0$ and $t_1 \prec s_1$ hold, and all states are either initial or final, then the low variables of $t_0$ and $t_1$ are equal iff the low variables of $s_0$ and $s_1$ are equal.

We make two minor adjustments to better fit the automaton notion, which is based on trace rather than state properties. First, we add a dummy initial source state $\hat{S}(i)$ with a transition that exposes the value of local variables and moves to the actual initial state $S(i)$ ($i$ is the secret input); and similarly for the target program, $C$. Secondly, we assume that a final state has a self-loop with a special transition label that exposes the value of local variables on termination.  With these changes, the negated non-interference can be represented by the  automaton shown in Figure~\ref{fig:non-interference}. It accepts an pair of execution traces if, and only if, initially the low-variables on the two traces have identical values, and either the corresponding outputs differ at some point, or final values of the low-variables are different. (The transition conditions are written as Boolean predicates which is a readable notation for describing a set of pairs of events; e.g., the $Low_1 \neq Low_2$ transition from state $I$ represents the set of pairs $(a,b)$ where $a$ is the $init(Low=i)$ event, $b$ is the $init(Low=j)$ event, and $i\neq j$.)

\begin{figure}
  \begin{center}
    \includegraphics[scale=0.7]{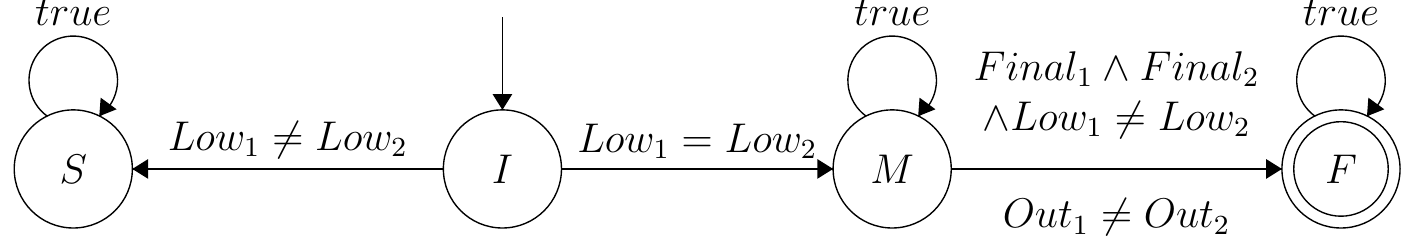}
  \end{center}
  \caption{A B\"uchi automaton for the negation of the non-interference property.}
  \label{fig:non-interference}
\end{figure}

Define the automaton-based relation $R$ by $(q,t_0,t_1) R (p,s_0,s_1)$ if $p=q$ and $t_0 \prec s_0$ and $t_1 \prec s_1$. We have the following theorem.

\begin{theorem}\label{thm:non-interference}
  If the simulation relation $\prec$ between $C$ and $S$ satisfies the additional properties needed to preserve non-interference, then $R$ is a refinement.
\end{theorem}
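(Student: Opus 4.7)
The plan is to verify the two conditions of Definition \ref{def:refinement}. For the base case, both initial configurations are of the form $(\iota^A, \iota^C, \iota^C)$ and $(\iota^A, \iota^S, \iota^S)$; they are related by $R$ because the automaton components agree and $\iota^C \prec \iota^S$ holds by the base clause of the simulation $\prec$.

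For the inductive step, suppose $((q,t_0,t_1),(q,s_0,s_1)) \in R$ and there is a transition in $\Delta^{A \times C^2}$ labeled $v = (v_0,v_1)$ leading to $(q',t_0',t_1')$. Because $\prec$ is a simulation preserving input and output events, each component step in $C$ can be matched by a step in $S$, yielding $u_j, s_j'$ with $(s_j, u_j, s_j') \in (\ptrans^S)$, $t_j' \prec s_j'$, and $u_j$ agreeing with $v_j$ on inputs and outputs. The crux of the proof is to show that the corresponding automaton transition $(q, u, q') \in \Delta^A$ is enabled on the $S$-side as well: once this is established, $R$-relatedness of the successor configurations (with the same automaton state $q'$ on both sides) and the acceptance clause $q' \in F^A \Rightarrow p' \in F^A$ follow immediately from the definition of $R$.

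I would proceed by case analysis on $q$ following the structure of Figure \ref{fig:non-interference}. When $q$ is the initial state, the outgoing transition exposes the low-variable values of the dummy initial states and branches on whether $Low_1 = Low_2$; property (b), applied at the initial states, says this equality holds on $C$ iff on $S$, so the same branch is taken. When $q$ is the state monitoring output events, preservation of output events by $\prec$ ensures that $u_0$ and $u_1$ exhibit the same output-equality relation as $v_0$ and $v_1$, so the automaton transitions (to itself or to the accepting $F$) match. For the final-state self-loop, property (a) guarantees that $t_j$ is final iff $s_j$ is final, and property (b), now applied at final states, guarantees that the final low-variable values agree on $C$ iff they agree on $S$, enabling the same automaton step. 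When $q$ is a sink state (the rejecting one reached on initially-unequal low-vars, or the accepting $F$), both sides remain in $q$ by construction.

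The main obstacle is precisely this case analysis. The simulation $\prec$ by itself only matches corresponding input and output events; it does not guarantee that the trace-level predicates the automaton inspects, namely initial low-variable equality and final low-variable equality, agree across $C$ and $S$. Properties (a) and (b) are what bridge this gap, and must be invoked at exactly the initial-transition and final-self-loop cases of the case analysis. Once they are in play, the remaining obligations of Definition \ref{def:refinement} are routine consequences of the simulation.
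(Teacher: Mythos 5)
Your proposal is correct and follows essentially the same route as the paper's proof: use the fact that $\prec$ is an input-output-preserving simulation to match the joint program transition, then case-analyze the automaton state, invoking condition (b) at the initial and final states for low-variable equality and condition (a) for agreement on finality. The only difference is that you spell out the base case explicitly, which the paper leaves implicit.
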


\begin{proof}
  Consider $(q,t_0,t_1) R (p,s_0,s_1)$. As $\prec$ is a simulation, for any joint transition from $(t_0,t_1)$ to $(t'_0,t'_1)$, there is a joint transition from $(s_0,s_1)$ to $(s'_0,s'_1)$ such that $t'_0 \prec s'_0$ and $t'_1 \prec s'_1$ holds. This transition preserves input and output values, as $\prec$ is an input-output preserving simulation.

  We have only to establish that the automaton transitions also match up. If the automaton state is either $F$ or $S$, the resulting state is the same, so by the refinement relation, we have $p'=p=q=q'$.

  Consider $q=I$. If $q'=S$ then the values of the low variables in $t_0,t_1$ differ; in which case, by condition (b), those values differ in $s_0,s_1$ as well, so $p'$ is also $S$. Similarly, if $q'=M$, then $p'=M$.

  Consider $q=M$. If $q'=F$ then either (1) $t_0,t_1$ are both final states and the values of the low variables differ; in which case, by condition (b), those values differ in $s_0,s_1$ as well, so $p'$ is also $F$, or (2) the outputs of the transitions from $t_0,t_1$ to $t'_0,t'_1$ differ; in which case, as $\prec$ preserves outputs, this is true also of the transition from $s_0,s_1$ to $s'_0,s'_1$, so $p'$ is also $F$. If $q'=M$ then the outputs are identical and one of $t_0,t_1$ is non-final; in which case, by condition (a), that is true also for the pair $s_0,s_1$, so $p'$ is also $M$.

  Finally, by the relation $R$, if $q=F$, the accepting state, then $p=F$ as well. This completes the case analysis and the proof. 
  \qed
\end{proof}

\section{Witnessing General Security Properties} \label{sec:general-witness}

The notion of refinement  presented in Section~\ref{sec:refinement} suffices for universal hyperproperties, as in that case a violation corresponds to a bundle of traces rejected by the automaton. Although many important hyperproperties are universal in nature, there are cases that require quantifier alternation. One example is \emph{generalized noninterference}, as formalized in~\cite{DBLP:conf/post/ClarksonFKMRS14}, which says that for every two traces of a program, there is a third trace that has the same high inputs as the first but is indistinguishable from the second to a low-clearance individual.
A violation for such hyperproperties, as defined in Section~\ref{sec:hyperproperties}, is not simply a bundle of traces, but rather a winning strategy for the antagonist in the corresponding game. A refinement relation does not suffice to match winning strategies. Therefore, we also introduce an input-equivalent bisimulation $B$ from $T$ to $S$, which is used in a back-and-forth manner to construct a matching winning strategy for the antagonist in $\game(S,\varphi)$ from any winning strategy for the antagonist in $\game(T,\varphi)$.

  A bisimulation $B$ ensures, by induction, that any infinite execution in $T$ has an input-equivalent execution in $S$, and vice-versa. For an execution $x$ of $T$, we use $B(x)$ to denote the set of input-equivalent executions in $S$ induced by $B$, which is non-empty. The symmetric notion, $B^{-1}(y)$, refers to input-equivalent executions in $T$ induced by $B$ for an execution $y$ of $S$. 
  
  \begin{definition} \label{def:bistrategy}
  Let $\xi^T$ be a strategy for the antagonist in $\game(T, \varphi)$ and $B$ be a bisimulation between $T$ and $S$. Then, the strategy $\xi^S = \strat(\xi^T, B)$ for the antagonist in $\game(S, \varphi)$ proceeds in the following way to produce a play $(y_1, \ldots, y_k)$:
  \begin{itemize}
  \item For every $i$ such that $\pi_i$ is existentially quantified, let $y_i$ be chosen by the protagonist in $\game(S,\varphi)$. Choose an input-equivalent execution $x_i$ from $B^{-1}(y_i)$;
  \item For every $i$ such that $\pi_i$ is universally quantified, choose $x_i$ in $\xi^T(x_1,\ldots,x_{i-1})$ and choose $y_i$ from $B(x_i)$.
  \end{itemize}
  \end{definition}

  Thus, the bisimulation helps define a strategy $\xi^S$ to match a winning antagonist stategy $\xi^T$ in $T$. We can establish that this stategy is winning for the antagonist in $S$ in two different ways. First, we do so under the assumption that $S$ and $T$ are \emph{input-deterministic}, i.e., any two executions of the program with the same input sequence have the same observation sequence. This is a reasonable assumption, covering sequential programs with purely deterministic actions.

\begin{theorem} \label{thm:typeI-refinement}
Let $S$ and $T$ be \emph{input-deterministic} programs over the same input alphabet $\pinputs$. Let $\varphi$ be a general security property with automaton $A$ representing the negation of its kernel $\kappa$. If there exists (1) a bisimulation $B$ from $T$ to $S$, and (2) a refinement relation $R$ from $A \times T^k$ to $A \times S^k$ for $\pinputs$, then $T$ securely refines $S$ for $\varphi$. 
\end{theorem}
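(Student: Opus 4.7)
The plan is to take any winning antagonist strategy $\xi^T$ in $\game(T,\varphi)$, set $\xi^S := \strat(\xi^T, B)$ from Definition~\ref{def:bistrategy}, and verify it is a matching winning antagonist strategy in $\game(S,\varphi)$, as required by Definition~\ref{def:preservation}. Well-definedness is immediate from the non-emptiness of $\xi^T(x_1,\ldots,x_{i-1})$, $B(x_i)$ and $B^{-1}(y_i)$ guaranteed by the strategy and bisimulation definitions. I would then think of $\xi^S$ as running a \emph{shadow} $T$-play $(x_1,\ldots,x_k)$ in parallel with the actual $S$-play $(y_1,\ldots,y_k)$, with $x_i$ and $y_i$ linked by $B$ at every step.

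For the matching condition, given a play $u=(x_1,\ldots,x_k)$ following $\xi^T$, I would construct an input-equivalent play $v=(y_1,\ldots,y_k)$ following $\xi^S$ by having the protagonist in $\game(S,\varphi)$ choose $y_i \in B(x_i)$ at existential positions, and letting $\xi^S$ adopt the $x_i$'s of $u$ as its shadow and emit some $y_i \in B(x_i)$ at universal positions; componentwise input-equivalence of $v$ and $u$ is then immediate from $B$. For the winning condition, given any play $v$ following $\xi^S$ with shadow $u$, I use that $\xi^T$ is winning to conclude $u$ violates $\kappa$, so by Lemma~\ref{lemma:product} the sequence $\zip(\trace(x_1),\ldots,\trace(x_k))$ lies in $\Lang(A \times T^k)$; Lemma~\ref{lemma:typeI-refinement} then delivers an input-equivalent sequence in $\Lang(A \times S^k)$, and a second application of Lemma~\ref{lemma:product} yields an $S$-bundle $(z_1,\ldots,z_k)$ whose traces are input-equivalent to those of $u$ and therefore of $v$.

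The critical step is concluding that $v$ itself violates $\kappa$, and here the bundle $(z_1,\ldots,z_k)$ recovered from the refinement need not coincide with $(y_1,\ldots,y_k)$: refinement only transports \emph{input} labels. This is where the \emph{input-determinism} of $S$ does the work it is assumed for: since $y_i$ and $z_i$ are $S$-executions sharing an input sequence, they must agree on the full observation sequence, so $\ctrace(y_i)=\ctrace(z_i)$ for each $i$; hence $v$ has the same compressed-trace bundle as $z$, violates $\kappa$, and is winning. Combined with the matching construction, both clauses of the matching definition hold (the per-automaton acceptance clause trivialises since there is only the single automaton $A$). The argument thus pivots entirely on input-determinism, which reconciles the input-only guarantee supplied by refinement with the trace-level guarantee supplied by bisimulation.
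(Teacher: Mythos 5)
Your proposal is correct and follows essentially the same route as the paper: define $\xi^S = \strat(\xi^T,B)$, get input-equivalence of the plays from the bisimulation, transport the violation through Lemma~\ref{lemma:product} and Lemma~\ref{lemma:typeI-refinement} to obtain an accepted $S$-bundle $z$, and invoke input-determinism to identify the observation sequences of $z$ and $y$. If anything, your phrasing of the last step (agreement of compressed traces rather than literal identity of $z$ and $y$) is slightly more careful than the paper's.
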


\begin{proof}
  We have to show, from Definition~\ref{def:preservation}, that for any winning strategy $\xi^T$ for the antagonist in $\game(T,\varphi)$, there is a matching winning stategy $\xi^S$ in $\game(S,\varphi)$. Let $\xi^S = \strat(\xi^T, B)$. Let $y=(y_1,\ldots,y_k)$ be the bundle of executions resulting from a play following the strategy $\xi^S$, and $x=(x_1,\ldots,x_k)$ the corresponding bundle resulting from $\xi^T$. By construction, $y$ and $x$ are input-equivalent.
  
  Since $\xi^T$ is a winning strategy, the trace of $x$ is accepted by $A \times T^k$. Then, from the refinement $R$ and Lemma~\ref{lemma:typeI-refinement}, there is a bundle $z = (z_1, \ldots, z_k)$ accepted by $A \times S^k$ that is input-equivalent to $x$. Therefore, $z$ is a win for the antagonist. Since $z$ is input-equivalent to $x$, it is also input-equivalent to $y$. Input-determinism requires that $z$ and $y$ are identical, so $y$ is also a win for the antagonist. Thus, $\xi^S$ is a winning strategy for the antagonist in $\game(S,\varphi)$. 
  \qed
\end{proof}

If $S$ and $T$ are not input-deterministic, a new notion of refinement is defined that intertwines the automaton-based relation, $R$, with the bisimulation, $B$. 
A relation $R \subseteq (Q^A \times (\pstates^T)^k)  \times (Q^A \times (\pstates^S)^k)$ is a \emph{refinement relation} from $A \times T^k$ to $A \times S^k$ for $I$ \emph{relative to} $B \subseteq \pstates^T \times \pstates^S$, if 
\begin{enumerate}
\item $((\iota^A, \pinit^{T^k}), (\iota^A,\pinit^{S^k}))$ is in $R$ and $(\pinit^{T^k}_i,\pinit^{S^k}_i) \in B$ for all $i$; and 
\item If $((q,t),(p,s))$ is in $R$, $(t_i,s_i)$ is in $B$ for all $i$, $((q,t), v, (q',t'))$ is in $\Delta^{A\times T^k}$, $(s,u,s')$ is in $(\ptrans^{S^k})$, $u$ and $v$ agree on $I$, and $(t'_i,s'_i) \in B$, there is $p'$ such that all of the following hold:
  \begin{enumerate}
  \item $((p,s),u,(p',s')) \in \Delta^{A \times S^k}$;
  \item $((q',t'), (p',s')) \in R$;
  \item if $q' \in F$ then $p' \in F$.
  \end{enumerate}
\end{enumerate}

Typically, a refinement relation implies, as in Lemma~\ref{lemma:typeI-refinement}, that a run in $A \times T^k$ is matched by some run in $A\times S^k$. The unusual refinement notion above instead  considers already matching executions of $T$ and $S$, and formulates an inductive condition under which a run of $A$ on the $T$-execution is matched by a run on the $S$-execution. The result is the following theorem, establishing the new refinement rule, where the witness is the pair $(R,B)$.

\begin{theorem} \label{thm:typeII-refinement}
  Let $S$ and $T$ be programs over the same input alphabet $\pinputs$. Let $\varphi$ be a general security property with automaton $A$ representing its kernel $\kappa$. If there exists (1) a bisimulation $B$ from $T$ to $S$, and (2) a relation $R$ from $A \times T^k$ to $A \times S^k$ \emph{that is a refinement relative to $B$}, then $T$ securely refines $S$ for $\varphi$. 
\end{theorem}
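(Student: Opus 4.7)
The plan is to adapt the argument of Theorem~\ref{thm:typeI-refinement}, replacing the appeal to input-determinism by an inductive construction that uses the relative refinement $R$ together with the bisimulation $B$ to build an accepting $A$-run on the $S$-side.

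First, given a winning antagonist strategy $\xi^T$ in $\game(T, \varphi)$, I would define the matching strategy $\xi^S = \strat(\xi^T, B)$ as in Definition~\ref{def:bistrategy}. A play of $\xi^S$ yields a bundle $y = (y_1, \ldots, y_k)$ of $S$-executions, together with a bundle $x = (x_1, \ldots, x_k)$ of $T$-executions built during the construction (via $B^{-1}$ at existential positions and $B$ at universal positions). Crucially, because $B$ is a bisimulation for $\pinputs$, I would strengthen this into a \emph{pointwise} witness: for each $i$, one can inductively choose matching transitions so that for every time step $n$ the states $x_i(n)$ and $y_i(n)$ are related by $B$, and the transition labels agree on $\pinputs$. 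By construction $x$ follows $\xi^T$, so $x$ is a win for the antagonist in $\game(T,\varphi)$.

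Second, since $\xi^T$ is winning, $v = \zip(\trace(x_1), \ldots, \trace(x_k))$ is accepted by $A \times T^k$. Fix an accepting run $q_0, q_1, \ldots$ of $A$ on $v$, giving joint transitions $((q_n, t_n), v_n, (q_{n+1}, t_{n+1})) \in \Delta^{A \times T^k}$. Let $u = \zip(\trace(y_1), \ldots, \trace(y_k))$, with joint $S^k$-transitions $(s_n, u_n, s_{n+1})$; pointwise bisimilarity gives $u_n =_{\pinputs} v_n$ and $(t_{n+1,i}, s_{n+1,i}) \in B$ for every $i$. I would now inductively construct a run $p_0, p_1, \ldots$ of $A$ on $u$: start from $p_0 = \iota^A$ with $((q_0, t_0), (p_0, s_0)) \in R$ via clause~(1) of the relative refinement. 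At each step, all hypotheses of clause~(2) are in hand, so it yields $p_{n+1}$ with $((p_n, s_n), u_n, (p_{n+1}, s_{n+1})) \in \Delta^{A \times S^k}$, $((q_{n+1}, t_{n+1}), (p_{n+1}, s_{n+1})) \in R$, and $q_{n+1} \in F^A \Rightarrow p_{n+1} \in F^A$. The last implication transfers the acceptance condition, so the $p$-run is accepting. Hence $u \in \Lang(A \times S^k)$, and Lemma~\ref{lemma:product} then shows that $y$ violates $\kappa$, i.e., the play is a win for the antagonist in $\game(S,\varphi)$.

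Finally, pointwise input-equivalence of $x$ and $y$ (from bisimulation) is exactly the input-equivalence required for matching plays, so $\xi^S$ matches $\xi^T$; combined with the argument above, $\xi^S$ is winning. By Definition~\ref{def:preservation}, $T$ securely refines $S$ for $\varphi$. The main obstacle I anticipate is the first step: the definition of $\strat$ only demands input-equivalent $x_i$ and $y_i$, but clause~(2) of the relative refinement requires the hypothesis $(t'_i, s'_i) \in B$ to be discharged at every joint step on each track. This forces a careful reading of the bisimulation so that the witness executions can be chosen to be pointwise $B$-related rather than merely input-equivalent traces; once that pointwise coupling is secured, the rest of the argument is a straightforward induction driven by clause~(2).
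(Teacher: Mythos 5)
Your proposal is correct and follows essentially the same route as the paper's own proof: define $\xi^S = \strat(\xi^T,B)$, couple the $T$-play and $S$-play pointwise via $B$, take an accepting run of $A\times T^k$ on the $T$-bundle, and inductively transport it to an accepting run of $A\times S^k$ on the $S$-bundle using clause~(2) of the relative refinement, transferring acceptance via the $F$-preservation condition. Your explicit observation that the coupling must be strengthened from mere input-equivalence of traces to pointwise $B$-relatedness of states is a point the paper asserts without comment (``by construction, $x$ and $y$ are pointwise related by $B$''), so your version is, if anything, slightly more careful on that step.
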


\subsection{Checking General Refinement Relations}

The main difference when checking security preservation of general hyperproperties, compared to the purely-universal properties handled in Section~\ref{sec:refinement}, is the necessity of the compiler to provide also the bisimulation $B$ as part of the witness. The verifier must then check also that $B$ is a bisimulation, which can be performed inductively using SMT queries in a similar way to the refinement itself. In the case that the language semantics guarantee input-determinism, as described above, then Theorem~\ref{thm:typeI-refinement} holds and checking $B$ and $R$ separately is sufficient. Otherwise, the check for $R$ described in Section~\ref{sec:smt-check} has to be modified to follow Theorem~\ref{thm:typeII-refinement} by checking if $R$ is a refinement relative to $B$.

Note that appropriate formulas $B(t, s)$ describing bisimulations can be extracted from the examples of refinement relations given in Section~\ref{sec:refinement-check}:

\begin{enumerate}
    \item \textbf{Constant Folding:} $(t = s) \land (\ell^T = \texttt{L3} \rightarrow \alpha^T(\texttt{y}) = 42) \land (\ell^T = L4 \rightarrow \alpha^T(\texttt{z}) = 1) \land (\ell^T = \texttt{L5} \rightarrow \alpha^T(\texttt{x}) = 0)$
    \item \textbf{Common-Branch Factorization:} $(t = s) \lor ((\ell^T = \texttt{L2}) \land ((\alpha^S(\texttt{i}) < \alpha^S(\texttt{arr\_size}))\,?\,(\ell^S = \texttt{L2}) : (\ell^S = \texttt{L5})) \land (\alpha^T = \alpha^S[\texttt{a} := \texttt{arr[0]}]))$
    \item \textbf{Switching Instructions:} $(\ell^S = \ell^T) \land (\ell^S \neq \texttt{L6} \rightarrow \alpha^S = \alpha^T) \land (\ell^S = \texttt{L6} \rightarrow \alpha^S[\texttt{b[j]} := \texttt{a[j-1]}] = \alpha^T[\texttt{a[j]} := \texttt{b[j-1]}])$
\end{enumerate}

Note the similarities between the bisimulations above and the refinement relations from Section~\ref{sec:refinement-check}. When the transformation does not alter the observable behavior of a program, it is often the case that the refinement relation between $\buff{A} \times T^k$ and $\buff{A} \times S^k$ is essentially formed by the $k$-product of a bisimulation between $T$ and $S$ across the several tracks.

\section{Discussion and Related Work}

This work tackles the important problem of ensuring that optimizations carried out by a compiler do not break vital security properties of the source program. We propose a methodology based on property-specific refinement rules, with the refinement relations (witnesses) being generated at compile time and validated independently by a generic refinement checker. This structure ensures that neither the code of the compiler nor the witness generator have to be formally verified in order to obtain a formally verifiable conclusion. It is thus eminently suited to production compilers, which are large and complex, and are written in hard-to-formalize languages such as \verb|C| or \verb|C++|. 

The refinement rules are synthesized from an automaton-theoretic definition of a security property. This construction applies to a broad range of security properties, including those specifiable in the HyperLTL logic~\cite{DBLP:conf/post/ClarksonFKMRS14}. When applied to automaton-based formulations of the non-interference and constant-time properties, the resulting proof rules are essentially identical to those developed in the literature in~\cite{amorim-et-al-2014,deng-namjoshi-SAS-2016,DBLP:conf/eurosp/MurraySE18} for non-interference and in~\cite{DBLP:conf/csfw/BartheGL18} for constant-time. Manna and Pnueli show in a beautiful paper~\cite{DBLP:conf/tls/MannaP87} how to derive custom  proof rules for deductive verification of a LTL property from an equivalent B\"uchi automaton; our constructions are inspired by this work.

Refinement witnesses are in a form that is composable: i.e., for a security property $\varphi$, if $R$ is a refinement relation establishing a secure transformation from $A$ to $B$, while $R'$ witnesses a secure transformation from $B$ to $C$, then the relational composition $R;R'$ witnesses a secure transformation from $A$ to $C$. Thus, by composing witnesses for each compiler optimization, one obtains an end-to-end witness for the entire optimization pipeline. 

Other approaches to secure compilation include full abstraction, proposed in~\cite{DBLP:conf/ecoopw/Abadi99} (cf.~\cite{secure-compilation-survey-2019}), and trace-preserving compilation~\cite{DBLP:conf/csfw/PatrignaniG17}. These are elegant formulations but difficult to check fully automatically, and hence not suitable for translation validation. The theory of hyperproperties~\cite{DBLP:conf/csfw/ClarksonS08} includes a definition of refinement in terms of language inclusion (i.e., $T$ refines $S$ if the language of $T$ is a subset of the language of $S$), which implies that any subset-closed hyperproperty is preserved by refinement in that sense. Language inclusion is also not directly checkable and thus cannot be used for translation validation. The refinement theorem in this paper for universal properties (which are subset-closed) uses a tighter step-wise inductive check that is suitable for automated validation.

Translation validation through compiler-generated refinement relations  arises from work on ``Credible Compilation'' by~\cite{rinard99,darkothesis} and ``Witnessing'' by~\cite{namjoshi-zuck-2013}. As the compiler and the witness generator do not require formal verification, the size of the trusted code base shrinks substantially. Witnessing also requires much less effort than a full mathematical proof: as observed in~\cite{namjoshi2014witnessing}, a mathematical correctness proof of SSA (Static Single Assignment) conversion in Coq is about 10,000 lines~\cite{DBLP:conf/pldi/ZhaoNMZ13}, while refinement checking can be implemented in around 1,500 lines of code, most of which comprises a witness validator which can be reused across different transformations. Our work shows how to extend this concept, originally developed for correctness checking, to the preservation of a large class of security properties, with the following important distinction. The refinement relations used for correctness are strong in that (via well-known results) refinement preserves \emph{all} linear-time properties defined over atomic propositions common to both programs. Strong preservation is needed as the desired correctness properties may not be fully known or their specifications may not be available in practice. On the other hand, security properties are limited in nature and are likely to be well known in advance (e.g., ``do not leak secret keys''). This motivates our construction of property-specific refinement relations. Being more focused, they are also easier to establish than refinements that preserve \emph{all} properties in a class.

The refinement rules defined here implicitly require that a security specification apply equally well to the target and source programs. Thus, they are most applicable when the target and source languages and attack models are identical. This is the case, for instance, in the optimization phase of a compiler, where a number of transformations are applied to code that remains within the same intermediate representation. To complete the picture, it is necessary to look more generally at transformations that go from a higher-level language (say LLVM bytecode) to a lower-level one (say \verb|x86| machine code). The so-called ``attack surfaces'' are different for these two levels, so it would be necessary to also incorporate a \emph{back-translation} of failures~\cite{DBLP:conf/popl/DevriesePP16} in the refinement proof rules. How best to do so is an intriguing topic for future work. 

Another question that we leave to future work is the completeness of the refinement rules. We have shown that a variety of common compiler transformations can be proved secure through logically simple refinement relations. The completeness question is whether \emph{every} secure transformation has an associated stepwise refinement relation. In the case of correctness, this is a well-known theorem by Abadi and Lamport~\cite{DBLP:conf/lics/AbadiL88}. To the best of our knowledge, a corresponding theorem is not known for security hyperproperties. 

There are a number of practical concerns that must be addressed to implement this methodology in a real compiler. One of these is a convenient notation for specifying desired security properties at the source level, for example as annotations to the source program. It is also necessary to define precisely how a security property is transformed by a program optimization. For instance, if a transformation introduces fresh variables, there needs to be a reasonable way to determine whether those should be assigned a high or low security level for a non-interference property. 

\bibliographystyle{plain}
\bibliography{main}

\newpage
\appendix

\section{Appendix}

\newtheorem{numbered-lemma}{Lemma}

\newtheorem{numbered-theorem}{Theorem}

\subsection{Proofs}

\setcounter{numbered-lemma}{1}

\begin{numbered-lemma}
Let $S,T$ be programs over the same alphabet $\EXT{\Sigma}$ and $A$ be a B\"uchi automaton over $\EXT{\Sigma}^k$. Let $I$ be a subset of $\EXT{\Sigma}$, and let $R$ be a refinement relation from $A \times T^k$ to $A \times S^k$ for $I$. For every sequence $v$ in $\Lang(A \times T^k)$, there is a  sequence $u$ in $\Lang(A \times S^k)$ such that $u$ and $v$ are input-equivalent.
\end{numbered-lemma}

\begin{proof}
Let $v = v_0, v_1 \ldots$ be in $\Lang(A \times T^k)$. Let $\rho^T = (q_0=\iota^A,t_0=\iota^{T^k}),v_0, (q_1,t_1), v_1, \ldots$ be an accepting run on this sequence. By a simple induction from the refinement conditions, there is a sequence $u=u_0, u_1, \ldots$ and a run $\rho^S = (p_0=\iota^A,s_0=\iota^{S^k}),u_0,(p_1,s_1),u_1, \ldots$ of $A \times S^k$ on $u$ such that for each $i$, all of the following hold. 
\begin{enumerate}
\item $((q_i,t_i), (p_i,s_i)) \in R$;
\item $u_i$ and $v_i$ agree on $I$;
\item if $q_i \in F$ then $p_i \in F$.
\end{enumerate}

As the run $\rho^T$ is accepting for $A\times T^k$, there are infinitely many points on the run satisfying $F$. By the third condition above, $F$ holds infinitely often along $\rho^S$ as well, so that run is accepting for $A \times S^k$. By the second condition, $u_i$ and $v_i$ agree on $I$ for all $i$, so $u$ and $v$ are input-equivalent, as required.
\qed
\end{proof}

\setcounter{numbered-theorem}{5}

\begin{numbered-theorem}
Let $S$ and $T$ be programs over the same input alphabet $\pinputs$. Let $\varphi$ be a general security property with automaton $A$ representing its kernel $\kappa$. If there exists (1) a bisimulation $B$ from $T$ to $S$, and (2) a relation $R$ from $A \times T^k$ to $A \times S^k$ \emph{that is a refinement relative to $B$}, then $T$ securely refines $S$ for $\varphi$. 
\end{numbered-theorem}

\begin{proof}

  Let $\xi^T$ be a winning strategy for the opponent in $\game(T, \varphi)$, and $\xi^S = \strat(\xi^T, B)$. By construction, a play of $\xi^S$ results in a bundle of $k$ executions of $S$, $y=(y_1,\ldots,y_k)$, such that the bundle is pointwise input-equivalent to a bundle $x=(x_1,\ldots,x_k)$ resulting from a play following $\xi^T$. Furthermore, by construction, $x$ and $y$ are pointwise related by $B$.

  As $\xi^T$ is a winning strategy for the opponent, $(\trace(x_1),\ldots,\trace(x_k))$ does not satisfy $\kappa$. Therefore, $w = \zip(\trace(x_1), \ldots, \trace(x_k))$ has an accepting run $r^T$ on $A \times T^k$ for which the program components are $\zip(\states(x_1), \ldots, \states(x_k))$.
  
  By the definition of $R$, the initial states of $A \times T^k$ and $A \times S^k$ are related by $R$, and the program components of these states are related pointwise by $B$. These program components are also the initial states of $\zip(x_1, \ldots, x_k)$ and $\zip(y_1, \ldots, y_k)$. By construction, the $i$-th entry on $\zip(x_1,\ldots,x_k)$ is related by $B^k$ to the $i$-th entry on $\zip(y_1,\ldots,y_k)$, for all positions $i$. By an inductive argument using the refinement relation, one can construct a corresponding run, $r^S$, of $A \times S^k$ such that the program components are $\zip(\states(x_1), \ldots, \states(x_k))$. Moreover, for every final state in $r^T$, its corresponding state in $r^S$ is also final. As $r^T$ is accepting, so is $r^S$. Hence, the play $y$ is a win for the opponent.

  As this is true for arbitrary choices made by the player in $\game(S,\varphi)$, the strategy $\xi^S$ is a winning strategy, as required.
  \qed
\end{proof}

\subsection{Buffering Automata} \label{sec:buffering-automata}

The refinement relation described in Definition~\ref{def:refinement} makes use of a version of automaton-program product that assumes that every program and automaton transition has an observable label. Although this can be reasonable depending on the attack model (for example, when modeling constant-time security~\cite{DBLP:conf/csfw/BartheGL18}), in many cases the program will include $\emptystring$-transitions that produce no observation. This can lead to situations when two output traces become unsynchronized. At the end of Section~\ref{sec:refinement} we briefly described a way to handle this issue by constructing a \emph{buffering automata}. We now explain this construction in more detail.

As an example of when buffering automata are needed, consider the program below, and the property which says that the output trace is identical for any pair of inputs. 
\begin{verbatim}
  int x := read_input();
  for(i := 0; true; i := i+1) {
    write_output(i);
    for(int t := x; t > 0; t := t-1); // delay for x steps
  }
\end{verbatim}

It is evident that this property holds, as every execution produces the output sequence $0,1,\ldots$. Yet, successive outputs are separated by $|\texttt{x}|$ steps, thus, outputs of executions with different input values are not synchronized, and the gap between corresponding outputs grows without bound.

To match this program behavior to the automaton for the negated trace property, we require converting the automaton $A$ over $\EXT{\Sigma}^k$ into an automaton $\buff{A}$ over $(\EXT{\Sigma} \cup \{\emptystring\})^k$, called a \emph{buffering automaton}, with the following property:
  $\buff{A}$ accepts $\zip(v_1,\ldots,v_k)$ iff $A$ accepts $\zip(\compress(v_1),\ldots,\compress(v_k))$. 
This produces an infinite-state automaton with buffers of unbounded size, that store observation histories of each trace. As the proof method is based on single-step refinement, an infinite-state automaton is not necessarily an obstacle to automated checking, so long as its structure can be represented symbolically in an SMT-supported theory. Furthermore, in many cases the refinement check can be performed without needing to reason about the structure of the automaton at all (see Section~\ref{sec:refinement-check} for examples).

Formally, given an automaton $A = (Q, \Sigma^k, \iota, \Delta, F)$, we define the \emph{buffering automaton} $\buff{A} = (Q \times (\Sigma^*)^k \times \{0, 1\}, (\Sigma \cup \{\emptystring\})^k, (\iota, \emptystring, \ldots, \emptystring, 0), \buff{\Delta}, F \times (\Sigma^*)^k \times \{1\})$. Note that the state space of $\buff{A}$ stores $k$ finite sequences of observations, representing the prefix of the traces produced so far by each copy of the program. The alphabet of $\buff{A}$ allows the copies to perform $\emptystring$-transitions, representing the fact that each track might produce observations at different rates.
The transition relation $\buff{\Delta}$ is defined in the following way. $((q, w, b), u, (q', w', b')) \in \buff{\Delta}$, for $q, q' \in Q$, $w, w' \in (\Sigma^*)^k$, $u \in (\Sigma \cup \{\emptystring\})^k$ and $b, b' \in \{0, 1\}$, if:

\begin{enumerate}
\item $q = q'$, $w_i u_i = w'_i$ for all $i \in \{1, \ldots, k\}$, $w'_i = \emptystring$ for some $i$, and $b' = 0$; or
\item there is $\sigma \in \Sigma^k$ such that $w_i u = \sigma_i w'_i$ for all $i \in \{1, \ldots, k\}$, $(q, \sigma, q') \in \Delta$, and $b' = 1$.
\end{enumerate}

Intuitively, whenever the $k$ copies of the program perform a transition, producing observations $u = (u_1, \ldots, u_k)$, $\buff{A}$ concatenates $u_i$ with the sequence $w_i$ stored as part of the state. Then, if any of the resulting sequences is empty, the state $q$ of the automaton does not advance. Otherwise, if all of the sequences are non-empty, the automaton reads the first position of all sequences, removing them from the buffer, and transitions to a new state $q'$ accordingly. In this way, $\buff{A}$ visits the same states that $A$ would visit if all of the tracks were synchronized, possibly with delays due to having to wait until all tracks have produced the next observation.

The $\{0, 1\}$ component of the state is needed to indicate whether the automaton makes progress. It remains as $0$ while the automaton is still waiting for an observation from one of the tracks, and is set to $1$ whenever the automaton takes a transition to a different $Q$ state. $\buff{A}$ accepts iff it makes progress into an accepting $Q$ state infinitely often.

\begin{lemma} \label{lemma:buffers}
$\buff{A}$ accepts $\zip(v_1, \ldots, v_k)$ iff A accepts $\zip(\compress(v_1), \ldots, \compress(v_k))$.
\end{lemma}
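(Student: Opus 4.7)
The plan is to prove the two directions of the bi-implication separately, each by a FIFO-bookkeeping argument linking $\buff{A}$'s steps to $A$'s steps. In both directions the crucial observation is that $\buff{A}$'s transition rules partition its moves into two kinds: rule~1 moves simply append to the buffers without advancing the underlying $A$-state, while rule~2 moves shift one symbol off the head of every buffer and perform a genuine $A$-transition; rule~2 is applicable exactly when, after appending $u_j$, every coordinate of $w_j u_j$ is non-empty, and rule~1 is applicable exactly otherwise, so the two rules are actually mutually exclusive given the state and the input.

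For the forward direction, suppose $\buff{A}$ has an accepting run $\rho = (q_0,w_0,b_0), u_0, (q_1,w_1,b_1), \ldots$ on $\zip(v_1,\ldots,v_k)$. I mark as \emph{progress steps} the indices $j$ where $\rho$ uses rule~2, equivalently where $b_{j+1}=1$; these occur infinitely often because acceptance of $\buff{A}$ requires $b_{j+1}=1$ and $q_{j+1}\in F$ infinitely often. At each progress step, the equation $w_j u_j = \sigma^{(j)} w_{j+1}$ uniquely determines the symbol $\sigma^{(j)} \in \Sigma^k$, and $(q_j,\sigma^{(j)},q_{j+1})\in\Delta$. Restricting $\rho$ to its $q$-components at progress steps therefore yields a valid, $F$-hitting run of $A$ on $\sigma^{(0)}\sigma^{(1)}\cdots$. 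The FIFO invariant, namely that coordinate $i$ of the buffer is exactly the suffix of non-$\emptystring$ letters of $v_i$ that have been enqueued but not yet dequeued, then implies that the $i$-th coordinate of the $\sigma$-sequence is precisely $\compress(v_i)$, giving $\sigma^{(0)}\sigma^{(1)}\cdots = \zip(\compress(v_1),\ldots,\compress(v_k))$ as required.

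For the backward direction, given an accepting run $p_0,\sigma^{(0)},p_1,\sigma^{(1)},\ldots$ of $A$ on $\zip(\compress(v_1),\ldots,\compress(v_k))$, I would build a run of $\buff{A}$ on $\zip(v_1,\ldots,v_k)$ inductively: at each step I append $u_j=(v_1(j),\ldots,v_k(j))$ to the buffers and then fire rule~1 if some coordinate of $w_j u_j$ is empty, or rule~2 otherwise, in the latter case consuming the next unused $A$-transition. The FIFO invariant again guarantees that the head symbols demanded by rule~2 agree position-by-position with $\zip(\compress(v_1),\ldots,\compress(v_k))$, so the $A$-transition consumed is always legal. Infinitely many progress steps are guaranteed because each $\compress(v_i)$ is infinite, hence each coordinate is replenished infinitely often and eventually all coordinates are simultaneously non-empty; $A$'s acceptance then transfers to $\buff{A}$'s acceptance. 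The main obstacle in writing this out cleanly is formulating a single invariant that ties together the buffer contents, the prefix of $v_i$ already read, and the number of $A$-transitions already fired; once that invariant is shown to be preserved by each case of $\buff{A}$'s transition rules, both directions fall out almost mechanically.
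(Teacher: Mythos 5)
Your proposal is correct and follows essentially the same route as the paper's proof: both directions match the rule-2 (``progress'') steps of $\buff{A}$ with transitions of $A$, use the FIFO buffer contents to show that the induced symbol sequence is exactly $\zip(\compress(v_1),\ldots,\compress(v_k))$, and transfer the B\"uchi condition through the $\{0,1\}$ progress flag. The only cosmetic difference is that you justify infinitely many progress steps in the forward direction directly from $\buff{A}$'s acceptance set $F\times(\Sigma^*)^k\times\{1\}$, whereas the paper invokes the standing assumption that no trace is eventually all-$\emptystring$; both justifications are valid.
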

\begin{proof}
Given an accepting run $\buff{r}$ of $\buff{A}$ on $\zip(v_1, \ldots, v_k)$, a corresponding accepting run $r$ of $A$ on $\zip(\compress(v_1), \ldots, \compress(v_k))$ can be constructed in the following way. First, note that there are two kinds of transitions in $\buff{r}$, each corresponding to one of the two cases in the definition of $\buff{\Delta}$. Due to our assumption that no program has a trace that after a point only produces $\emptystring$, transitions of the second kind must occur infinitely often in $\buff{r}$. For every such transition $((q, w, b), u, (q', w', b'))$ there is a corresponding transition $(q, \sigma, q') \in \Delta$, and between two such transitions $q$ does not change. Therefore, the corresponding transitions in $\Delta$ can be put end-to-end to form a run $r$ on $A$. By construction of $\buff{\Delta}$, the trace of $r$ is precisely $\zip(\compress(v_1), \ldots, \compress(v_k))$. Furthermore, note that accepting states $(q', w', 1)$ of $\buff{A}$ can only be reached by transitions of the second kind, and in every such state $q'$ is an accepting state of $A$. Therefore, every occurrence of an accepting state $(q', w', 1)$ in $\buff{r}$ has a corresponding occurrence of $q'$ in $r$. Since such occurrences happen infinitely often, $r$ is an accepting run.

To prove the opposite direction, assume that $r$ is an accepting run of $A$ on $\zip(\compress(v_1), \ldots, \compress(v_k))$. Then, a corresponding accepting run $\buff{r}$ of $\buff{A}$ on $\zip(v_1, \ldots, v_k)$ can be constructed in the following way. Note that the only source of nondeterminism in $\buff{\Delta}$ is in the choice of $(q, \sigma, q') \in \Delta$ for the second kind of transition. If this choice is always made by choosing the next transition in $r$, then an accepting state $(q', w', 1)$ will be visited infinitely often. Therefore, $\buff{r}$ will be accepting. The construction of $\buff{\Delta}$ guarantees that $\buff{r}$ constructed in this way will be a valid run of $\zip(v_1, \ldots, v_k)$ in $\buff{A}$.
\qed
\end{proof}

Since the alphabet of $\buff{A}$ is $(\Sigma \cup \{\emptystring\})^k$, we can take its product with the $k$-fold composition of a program with $\emptystring$ transitions.

\begin{lemma}
  A bundle of executions $\sigma=(\sigma_1,\ldots,\sigma_k)$ produced by program $P$ does not satisfy $\kappa$ iff $\zip(\trace(\sigma_1), \ldots, \trace(\sigma_k))$ is accepted by $\hat{A} \times P^k$. 
\end{lemma}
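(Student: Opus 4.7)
The plan is to reduce the claim to a chain of equivalences, each of which is immediate from an earlier result in the paper.

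First, I would unfold the definition of a violation. By the semantics of $\varphi$ in Section~\ref{sec:hyperproperties}, the bundle $\sigma = (\sigma_1,\ldots,\sigma_k)$ fails to satisfy $\kappa$ exactly when the bundle of compressed traces $(\ctrace(\sigma_1),\ldots,\ctrace(\sigma_k))$ lies in the complement of $\kappa$. Since $A$ was chosen so that $\Lang(A)$ is the complement of $\kappa$, this is equivalent to $\zip(\ctrace(\sigma_1),\ldots,\ctrace(\sigma_k)) \in \Lang(A)$. Recalling that $\ctrace(\sigma_i) = \compress(\trace(\sigma_i))$ by definition, the violation condition is just $\zip(\compress(\trace(\sigma_1)),\ldots,\compress(\trace(\sigma_k))) \in \Lang(A)$.

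Next, I would invoke Lemma~\ref{lemma:buffers} to move to the buffering automaton: the previous condition is equivalent to $\zip(\trace(\sigma_1),\ldots,\trace(\sigma_k)) \in \Lang(\buff{A})$. This is exactly the point of constructing $\buff{A}$; the buffering automaton absorbs $\emptystring$-transitions so that membership of the uncompressed zipped trace corresponds to membership of the compressed one in the original automaton.

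Finally, I would apply the product lemma to $\buff{A} \times P^k$. Lemma~\ref{lemma:product} was stated for programs without $\emptystring$-transitions, but since $\buff{A}$ operates over the extended alphabet $(\EXT{\Sigma}\cup\{\emptystring\})^k$, the same construction and argument apply verbatim with $\buff{A}$ in place of $A$: a zipped trace is accepted by $\buff{A}\times P^k$ iff it is accepted by $\buff{A}$ and each component is the trace of an execution of $P$. Since $\sigma_1,\ldots,\sigma_k$ are by hypothesis executions of $P$, the second conjunct is automatic. Chaining the three equivalences gives the lemma.

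The only mildly delicate point is justifying that Lemma~\ref{lemma:product} transfers to $\buff{A}$ and to programs with $\emptystring$-transitions; this is essentially a matter of re-reading the product definition and observing that it is symbol-agnostic. A brief remark to that effect, or a reference to the analogous construction in Appendix~\ref{sec:buffering-automata}, should suffice.
\qed
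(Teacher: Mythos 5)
Your proposal is correct and follows essentially the same route as the paper: the identical chain of three equivalences (definition of $\kappa$ via $\Lang(A)$ and compressed traces, then Lemma~\ref{lemma:buffers} to pass to $\buff{A}$, then the product lemma applied to $\buff{A}\times P^k$). Your added remark that Lemma~\ref{lemma:product} must be re-read over the extended alphabet $(\EXT{\Sigma}\cup\{\emptystring\})^k$ is a fair point that the paper's proof leaves implicit, but it does not change the argument.
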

\begin{proof}
  Bundle $\sigma$ does not satisfy $\kappa$
  iff (by definition) $\zip(\ctrace(\sigma_1),\ldots,\ctrace(\sigma_k))$ is accepted by $A$,
  iff (by Lemma~\ref{lemma:buffers}) $\zip(\trace(\sigma_1),\ldots,\trace(\sigma_k))$ is accepted by $\hat{A}$,
  iff (as $\zip(\trace(\sigma_1),\ldots,\trace(\sigma_k))$ is the trace of execution $\zip(\sigma_1,\ldots,\sigma_k)$ of $P^k$ and Lemma~\ref{lemma:product}) $\zip(\trace(\sigma_1),\ldots,\trace(\sigma_k))$ is accepted by $\hat{A} \times P^k$.
  \qed
\end{proof}

\subsection{Additional Examples of Refinement Relations}

\subsubsection{Example 3: Switching Instructions (Special Cases)}

As mentioned in Section~\ref{sec:refinement-check}, setting up a refinement relation for the switching-instructions transformation becomes harder if the instructions being switched produce observations, due to the fact that $(q^S = q^T)$ might not hold at the location of the switch. There are some special cases, however, where this may be done more easily:

\begin{enumerate}
    \item If the different copies of $S$ and $T$ are guaranteed to be synchronized (i.e., it is always the case that $\ell^S_i = \ell^S_j$ and $\ell^T_i = \ell^T_j$, for all $i$ and $j$).
    \item If only one of the switched instructions produces an observation.
\end{enumerate}

The advantage of the first case is that if all $k$ copies of the program produce corresponding observations at the same time, the automaton can compare them immediately and does not need to store them in its internal state. Then, it is likely that for many relevant security properties the state of the automaton does not change at all when executing the switched instructions, and so the relationship $(q^S = q^T)$ still holds. In this case the constraint that $\ell^S_i = \ell^S_j$ and $\ell^T_i = \ell^T_j$, for all $i$ and $j$, must be added to $R$. Additionally, $R$ will need to include enough information for the inductive check to confirm that the state of the automaton indeed does not change while observing the switched instructions. What this information is will depend on the property under verification.

The second case can be handled in a couple of different ways. One option, for example, is to relax the constraint $(\ell^S_i = \ell^T_i)$ to instead relate the instruction that produces an observation in the source with the same instruction that produces an observation in the target (say, $L5$ in the source with $L6$ in the target). Then, stuttering can be used in both the source and the target to align the executions (for example, the target stutters in $L5$, and the source stutters in $L6$). This will also require changing the constraints $(\ell^S_i \neq \ell^S_{switch} \rightarrow \alpha^S_i = \alpha^T_i) \land (\ell^S_i = \ell^S_{switch} \rightarrow \delta(\alpha^S_i, \alpha^T_i))$ keeping track of the changes in state appropriately. The modified refinement seeks to align the observations rather than the instructions.
An easier alternative, instead, might be to model $S$ and $T$ in such a way that the two instructions are considered as a block, comprising a single transition in the transition systems. Then, since the two blocks modify the state of the program in the same way and produce the same observation, the refinement relation becomes trivial.

\subsubsection{Example 4: Dead-Branch Elimination}

This optimization replaces an \emph{if} statement whose guard is trivially false by only the contents of the \emph{else} branch. For simplicity, assume there is always an \emph{else} branch (\emph{if} statements with no \emph{else} branch can be modeled by leaving the \emph{else} branch empty). Consider the following example programs:

\begin{multicols}{2}

\begin{verbatim}
// Source program S
L1: int x := secret_input();
L2: if (0) {
L3:     public_output(x);
L4: } else {
L5:     secret_output(x);
L6: }
\end{verbatim}

\columnbreak

\begin{verbatim}
// Target program T
L1: int x := secret_input();
L2: secret_output(x);
\end{verbatim}

\end{multicols}

The refinement relation for this source-target pair can be written as:
\begin{align*}
    (q^S = q^T) \land \bigwedge^k_{i=1} (\alpha^S_i = \alpha^T_i) \land L(\ell^S_i, \ell^T_i)
\end{align*}
where $L = \{(L1, L1), (L2, L2), (L5, L2), (End, End)\}$ denotes which locations can be related between the source and the target. Note that since $L2$ in the target is related to both $L2$ and $L5$ in the source, it is necessary to allow stuttering when checking the refinement relation.

This same refinement relation can be used for arbitrary programs, as long as $L$ is redefined appropriately. Let $\ell^S_{if}$ be the program location of the \emph{if} guard ($L2$ in the example), and $\ell^S_{else}$ be the program location at the start of the \emph{else} branch ($L5$ in the example). Let $\ell^T_{else}$ be the location in the target corresponding to $\ell^S_{else}$ ($L2$ in the example). $L$ is then defined to be a binary relation relating program locations between the source and target in the following way: corresponding locations are related by $L$, and $\ell^T_{else}$, in addition to being related to $\ell^S_{else}$, is also related to $\ell^S_{if}$.

Note that this relation also defines a relation $B((\alpha^S_i, \ell^S_i), (\alpha^T_i, \ell^T_i)) = (\alpha^S_i = \alpha^T_i) \land L(\ell^S_i, \ell^T_i)$ between single program states of $S$ and $T$ such that $B$ is a (stuttering) bisimulation. Therefore, this refinement relation can also be used for security properties with arbitrary quantifier alternation.

\subsubsection{Example 5: Expression Flattening}

This optimization ``flattens'' a nested expression. It can be thought of as turning an expression into three-address code. For example:

\begin{multicols}{2}

\begin{verbatim}
// Source program S
L1: int x := (f(a) + b) * (g(c) / d);
\end{verbatim}

\columnbreak

\begin{verbatim}
       // Target program T
       L1: int t0 := f(a);
       L2: int t1 := t0 + b;
       L3: int t2 := g(c);
       L4: int t3 := t2 / d;
       L5: int x := t1 * t3;
\end{verbatim}

\end{multicols}

Note that many programming languages leave the evaluation order of subexpressions undefined. Different orders of evaluation may generate different result states, e.g., if subexpressions such as \verb|f(a)|, \verb|g(c)| have side-effects. This transformation picks a certain evaluation order. Thus, while in a detailed program semantics the original expression has an undetermined evaluation order, a specific order is fixed in the target program. This is an example of a transformation that removes non-determinism from the program.

The original and flattened evaluations are equivalent when the detailed evaluation is treated as a block, ignoring the values of intermediate variables. Thus, the correctness argument establishes that, starting from states related by $\alpha^S_i(v) = \alpha^T_i(v)$ for all $v \in \{\texttt{a}, \texttt{b}, \texttt{c}, \texttt{d}\}$, the ordered execution in the target $T$ matches the result of one of the possible non-deterministic evaluation choices in the source $S$, resulting in states that are also identically related, in this case $\alpha^S_i(v) = \alpha^T_i(v)$ for all $v \in \{\texttt{a}, \texttt{b}, \texttt{c}, \texttt{d}, \texttt{x}\}$.

For any automaton, we thus define the relation $R$ as:
\begin{align*}
    (q^S = q^T) \land \bigwedge^k_{i=1} (\alpha^S_i =_V \alpha^T_i)
\end{align*}
where $V = \{\texttt{a}, \texttt{b}, \texttt{c}, \texttt{d}, \texttt{x}\}$ denotes all non-temporary variables (i.e., excluding variables introduced by the transformation), and $(\alpha^S_i =_V \alpha^T_i)$ denotes the pointwise equality of $\alpha^S_i(v)$ and $\alpha^T_i(v)$ for all $v \in V$.

It is easy to show that this relation is a refinement relation if the transition system representing $T$ is modeled in such a way that the transformed segments are treated as a block of instructions defining a single transition. In this case the refinement check requires only a single inductive step, following the reasoning described above. If instead the instructions in the target are treated as individual transitions rather than a block, it is necessary to add constraints to remember the value of the intermediate variables in the target, for example $(\ell^T_i \in \{L3, L4, L5\} \rightarrow \alpha^T_i(\texttt{t1}) = \alpha^T_i(\texttt{t0}) + \alpha^T_i(\texttt{b}))$. With this addition, the relation can then be used as a stuttering refinement, where the stuttering is bounded by the length of the expanded block.

Note that the correctness of the refinement relies on the assumption that there is no change in observations between the original single-instruction block and the new multiple-instruction block. That may not be the case depending on the attack model. For instance, if \verb|x := (secret + 2) - secret| is compiled to \verb|t1 := secret + 2; x := t1 - secret| then the secret value is leaked in the value of \verb|t1|, even though there is no leakage in the original assignment to \verb|x| in the source. Whether this matters depends on the attack model. If the attack model allows values of temporary variables to be observable, then the transformation is not secure.

Unlike previous examples, the correspondence for single traces in expression flattening is not in general a bisimulation, since the elimination of non-determinism means that there might be a transition in the source with no corresponding transition in the target. Therefore, the security preservation result applies only to universally-quantified properties.

\subsubsection{Example 6: Loop Peeling}

This optimization peels off the first iteration of a loop, as long as the compiler can guarantee that the loop guard will be true on entry. For simplicity, assume only \emph{while} loops. Consider the following example programs:

\begin{multicols}{2}

\begin{verbatim}
    // Source program S
    L1: int x := 0;
    L2: int k := 0;
    L3: while (k < 8) {
    L4:     if (k == 0) {
    L5:         x := secret_input();
    L6:     } else {
    L7:         x := x + x;
    L8:     }
    L9:     k := k + 1;
    L10:    public_output(x % k);
    L11: }
\end{verbatim}

\columnbreak

\begin{verbatim}
    // Target program T
    L1: int x := 0;
    L2: int k := 0;
    L3: if (k == 0) {
    L4:     x := secret_input();
    L5: } else {
    L6:    x := x + x;
    L7: }
    L8: k := k + 1;
    L9: public_output(x % k);
    L10: while (k < 8) {
    L11:    if (k == 0) {
    L12:        x := secret_input();
    L13:    } else {
    L14:        x := x + x;
    L15:    }
    L16:    k := k + 1;
    L17:    public_output(x % k);
    L18: }
\end{verbatim}
\end{multicols}

Since the two programs produce the same traces, the following refinement relation can be used independently of the security property:
\begin{align*}
    L = \{&(L1, L1), (L2, L2), (L3, L3), (L4, L3), (L5, L4), (L7, L6), (L9, L8), (L10, L9), \\
    &(L3, L10), (L4, L11), (L5, L12), (L7, L14), (L9, L16), (L10, L17), (End, End)\}
\end{align*}
\begin{align*}
    (q^S = q^T) \land \bigwedge^k_{i=1} (\alpha^S_i = \alpha^T_i) \land L(\ell^S_i, \ell^T_i) \land (\ell^T_i = L3 \rightarrow \alpha^S_i(k) < 8)
\end{align*}

This refinement relation requires stuttering, as the target stutters in $L3$ while the source moves from $L3$ and $L4$.
For general programs, $L$ should be defined as follows:

\begin{enumerate}
    \item Corresponding locations are related by $L$.
    \item Every instruction in the peeled loop iteration in the target is related by $L$ to the corresponding instruction within the loop body in the source.
    \item $\ell^T_{peel}$, the location at the start of the peeled loop iteration in the target, is additionally related by $L$ to $\ell^S_{guard}$, the location of the loop guard in the source. In the example, both $\ell^T_{peel}$ and $\ell^S_{guard}$ are $L3$.
\end{enumerate}

The formula $\varphi_0 = (\ell^T_i = L3 \rightarrow \alpha^S_i(k) < 8)$ in the example serves to state that the loop will be entered. It must be added to the refinement relation in order to justify the execution of the peeled iteration in the target. For arbitrary programs, the general form of $\varphi_0$ is $(\ell^T_i = \ell^T_{peel} \rightarrow \psi(\alpha^S_i))$, where $\psi$ is the loop guard at location $\ell^S_{guard}$.
In the example, adding just $\varphi_0$ suffices, since it can be proved inductively from the previous instruction $\texttt{int k := 0}$. However, in more complex cases other formulas $\varphi_1, \ldots, \varphi_n$ of the same form might need to be added to describe the compiler's reasoning leading up to the conclusion that $\varphi_0$ holds. For example, suppose that $\texttt{int k := 0}$ was replaced by $\texttt{int k := x}$. Then it would be necessary to add a formula such as $\varphi_1 = (\ell^T_i = L2 \rightarrow \alpha^S_i(x) < 8)$ in order for the relation to pass the refinement check. Therefore, the general form of the refinement relation for an arbitrary program is:
\begin{align*}
    (q^S = q^T) \land \bigwedge^k_{i=1} (\alpha^S_i = \alpha^T_i) \land L(\ell^S_i, \ell^T_i) \land \varphi_0 \land \varphi_1 \land \ldots \land \varphi_n
\end{align*}

For security properties with arbitrary quantifier alternation, the bisimulation can simply be defined as $B((\alpha^S_i, \ell^S_i), (\alpha^T_i, \ell^T_i)) = (\alpha^S_i = \alpha^T_i) \land L(\ell^S_i, \ell^T_i) \land \varphi_0 \land \varphi_1 \land \ldots \land \varphi_n$.

\subsubsection{Example 7: Register Spilling}

This transformation ``spills'' the contents of a register to memory, if the register has to be free for a subsequent operation. For example, consider a machine with only two registers, \texttt{A} and \texttt{B}.

\begin{multicols}{2}
\begin{verbatim}
// Source program S
L1: int t0 := a + b;
L2: public_output(t0);
L3: int t1 := a - b;
\end{verbatim}

\columnbreak

\begin{verbatim}
// Target program T
// assume A = a, B = b
L1: spill[0] := A;
L2: A := A + B;
// A = t0, B = b, spill[0] = a
L3: public_output(A);
L4: A := spill[0];
L5: A := A - B;
// B = b, A = t1, spill[0] = a
\end{verbatim}
\end{multicols}

There is a simple correspondence between the source variables and the target registers and spill array entries, inferred by the register allocation and spilling method, and shown in the code comments. Note that this correspondence varies depending on the program location. Therefore, it can be expressed by a mapping $\sigma : (\textsc{Reg} \cup \textsc{Spill}) \times \textsc{Loc} \rightarrow \mathcal{V}$ from a register (or position in the spill array) and a program location in the target to a corresponding variable in the source.

The two programs can thus be related by the following stuttering refinement relation $R$ (note that the stuttering can be avoided if we allow transitions composed of blocks of instructions):
\begin{align*}
    L = \{(L1, L1), (L1, L2), (L2, L3), (L3, L4), (L3, L5), (End, End)\}
\end{align*}
\begin{align*}
    (q^S = q^T) \land \bigwedge^k_{i=1} \left(L(\ell^S_i, \ell^T_i) \land \bigwedge_{v \in (\textsc{Reg} \cup \textsc{Spill})} (\alpha^S_i(\sigma(v, \ell^T_i)) = \alpha^T_i(v))\right)
\end{align*}

The refinement ensures that at every point of the program, every register and position in the spill array contains the same value stored in the corresponding variable of the source given by $\sigma$. The single-trace correspondence $(L(\ell^S_i, \ell^T_i) \land \bigwedge_{v \in (\textsc{Reg} \cup \textsc{Spill})} (\alpha^S_i(\sigma(v, \ell^T_i)) = \alpha^T_i(v))$ is a stuttering bisimulation, therefore this refinement can also be used with properties with arbitrary quantifier alternation.

Note that we are making the assumption that the transformation preserves observations. This might not be the case if the attack model includes, for example, the address of memory positions accessed by the program, as the target program introduces a new array for spilling. In this case security might be preserved or not, depending on the security property under verification, and a specialized refinement relation would be necessary.

\end{document}